

\documentclass[preprint,12pt]{elsarticle}




\usepackage{amssymb}
\usepackage{graphicx}
\usepackage{wrapfig}
\usepackage{latexsym}
\usepackage{amsmath}
\usepackage{algorithm}
\usepackage{algorithmic}
\usepackage{multicol}
\usepackage{amsthm}
\usepackage{subfigure}

\newtheorem{theorem}{Theorem}
\newtheorem{lemma}{Lemma}
\newtheorem{definition}{Definition}




\journal{Theoretical Computer Science}

\begin{document}

\begin{frontmatter}



\title{Partial Gathering of Mobile Agents \\in Asynchronous Unidirectional Rings\tnoteref{t1,t2}}
\tnotetext[t1]{The conference version of this paper is published in the proceedings of 16th International
Conference on Principles of Distributed Systems (OPODIS 2012).}
\tnotetext[t2]{This work was supported by JSPS KAKENHI Grant Numbers 24500039,
24650012, 25104516, 26280022, and 26330084.
}


\author{Masahiro~Shibata\corref{cor1}}
\ead{m-sibata@ist.osaka-u.ac.jp}
\author{Shinji~Kawai\corref{cor1}}

\author{Fukuhito~Ooshita\corref{cor1}}
\ead{f-oosita@ist.osaka-u.ac.jp}
\author{Hirotsugu~Kakugawa\corref{cor1}}
\ead{kakugawa@ist.osaka-u.ac.jp}
\author{Toshimitsu~Masuzawa\corref{cor1}}
\ead{masuzawa@ist.osaka-u.ac.jp}

\cortext[cor1]{Corresponding author. Tel.:+81 6 6879 4117. Fax: +81 6 6879 4119.}

\address{Graduate School of Information Science and Technology, Osaka University, 1-5
Yamadaoka, Suita, Osaka 565-0871, Japan}

\begin{abstract}
In this paper, we consider the partial gathering problem of mobile agents in
asynchronous unidirectional rings equipped with whiteboards on nodes.
The partial gathering problem is a new generalization of the total gathering problem.
The partial gathering problem requires, for a given integer  $g$, that each agent
should move to a node and terminate so that at least $g$ agents
should meet at the same node. 
The requirement for the partial gathering problem is weaker than that for
the (well-investigated) total gathering problem, and thus, we have interests in
clarifying the difference on the move complexity between them.
We propose three  algorithms to solve the partial gathering problem. 
The first algorithm is deterministic but requires unique ID of each agent. 
This algorithm achieves the partial gathering in $O(gn)$ total moves, 
where $n$ is the number of nodes. 
The second algorithm is randomized and requires no unique ID of each agent (i.e., anonymous).
This algorithm achieves the partial gathering in expected $O(gn)$  total moves. 
The third  algorithm is deterministic and requires no unique ID of each agent.
For this case, we show that there exist initial configurations in which no algorithm can solve the problem 
and  agents can achieve the partial gathering in $O(kn)$ total moves for solvable initial configurations,
where $k$ is the number of agents.
Note that the total gathering problem requires $\Omega (kn)$ total moves, while the partial gathering problem requires $\Omega (gn)$ total moves in each model.
Hence, we show that the move complexity
of the first and second  algorithms is asymptotically optimal.


\end{abstract}

\begin{keyword}
distributed system, mobile agent, gathering problem, partial gathering


\end{keyword}

\end{frontmatter}



\section{Introduction}
\label{intro}

\subsection{Background and our contribution}
\label{background}

A {\em distributed system} is a system that consists of a set of computers ({\em nodes})  and communication links.
In recent years, distributed systems have become large and design of distributed systems has become complicated.
As an effective way to design distributed systems, (mobile) agents have attracted a lot of attention \cite{gathering}.
Design of distributed systems can be simplified using agents because they can traverse the system with carrying information and process tasks on each node.

The total gathering problem is a fundamental problem for cooperation of agents \cite{gathering,token1,token2}. 
The total gathering problem requires all agents to meet at a single node in finite time. 
The total gathering problem is useful because, by meeting at a single node, all agents can share information or synchronize behaviors among them.

In this paper, we consider a new generalization of the total gathering problem, called the {\em partial gathering problem}. The partial gathering problem does not always require all agents to gather at a single node, but requires agents to gather partially at several nodes. More precisely, we consider the problem which requires, for a given integer  $g$, that each agent should move to a node and terminate at a node so that at least $g$ agents should meet at the node.
We define this problem as the {\em $g$-partial gathering problem}. 
We assume that $k$ is the number of agents. 
Clearly if $k/2 < g\leq k$ holds, the $g$-partial gathering problem is equivalent to the total gathering problem. If $g\leq k/2$ holds, the requirement for the $g$-partial gathering problem is weaker than that for the total gathering problem, and thus it seems possible to solve the $g$-partial gathering problem with fewer total moves. From a practical point of view, the $g$-partial gathering problem is still useful because agents can share information and process tasks cooperatively among at least $g$ agents.

The contribution of this paper is to clarify the difference on the move complexity between the total gathering problem and the $g$-partial gathering problem.
We investigate the difference in asynchronous unidirectional rings equipped with whiteboards on nodes.
The contribution of this paper is summarized in Table \ref{table:result}, where $n$ is  the number of nodes.
\begin{table*}[t]
\begin{center}
\small
\caption[smallcaption]{Proposed algorithms for the $g$-partial gathering problem in asynchronous unidirectional rings.}
\label{table:result}
\newlength{\myheight}
\setlength{\myheight}{10.5mm}
\begin{tabular}{|c|c|c|c|}
\hline \parbox[c][\myheight][c]{0cm}{} & \parbox[c]{18mm}{\shortstack {Model 1\\ (Section 3)}}  & 
\parbox[c]{18mm}{\shortstack{Model 2\\ (Section 4)}} & \parbox[c]{18mm}{\shortstack{Model 3\\(Section 5)}} \\
\hline Unique agent ID & Available & Not available & Not available \\
\hline \parbox[c][\myheight][c]{0cm}{} \parbox[c]{25mm}{Deterministic\\/Randomized}  & Deterministic & Randomized & Deterministic  \\
\hline Knowledge of $k$ & Not available & Available & Available \\
\hline The total moves & $O(gn)$ & $O(gn)$ & $O(kn)$  \\
\hline \parbox[c][\myheight][c]{0cm}{}Note     & -  & - & \parbox[c]{45mm}{\shortstack{There exist \\ unsolvable configurations}} \\
\hline 
\end{tabular}
\end{center}
\end{table*}
First, we propose a deterministic algorithm to solve the $g$-partial gathering problem for the case that agents have distinct IDs.
This algorithm requires $O(gn)$ total moves.
Second, we propose a randomized algorithm to solve the $g$-partial gathering problem for the case that agents have no IDs but agents know the number $k$ of agents.
This algorithm requires expected $O(gn)$ total moves.
Third, we consider a deterministic algorithm to solve the $g$-partial gathering problem for the case that agents have no IDs but agents know the number $k$ of agents.
In this case, we show that there exist initial configurations for which the $g$-partial gathering problem is unsolvable. 
Next, we propose a deterministic algorithm to solve the $g$-partial gathering problem for any solvable initial configuration.
This algorithm requires $O(kn)$ total moves.
Note that the total gathering problem requires $\Omega (kn)$ total moves regardless of deterministic or randomized settings.
This is because in the case that  all the agents are uniformly deployed, at least half agents require $O(n)$ moves to meet at one node.
Hence, the first and second  algorithms imply that the $g$-partial gathering problem can be solved with fewer total moves compared to the total gathering problem for the both cases.
In addition, we show a lower bound  $\Omega(gn)$ of the total moves  for the $g$-partial gathering problem if $g\ge 2$. This means the first and second algorithms are  asymptotically optimal in terms of the total moves.

\subsection{Related works}
\label{work}

Many fundamental problems for cooperation of mobile agents have been studied. For example, the searching problem \cite{black2}, the gossip problem \cite{Mrs.Suzuki}, the
election problem \cite{direction}, and the gathering problem \cite{gathering,token1,token2,direction,speed,token3, token4,black1,Mr.Kawai,LocationAware,Tree1,Tree2,Tree3,Arbitrary1} have been studied.

In particular, the gathering problem has received a lot of attention and has been extensively studied in many topologies including trees \cite{gathering,Mrs.Suzuki,LocationAware,Tree1,Tree2,Tree3}, tori \cite{gathering,token3}, and rings \cite{gathering,token1,token2,direction,speed,token4,black1,Mr.Kawai}. The gathering problem for rings has been extensively studied because
algorithms for such highly symmetric topologies give techniques to treat the essential difficulty of the gathering problem such as breaking symmetry.

For example, Kranakis et al. \cite{token1} considered  the gathering problem for  two mobile agents in ring networks.
This algorithm allows each agent to use a token to select the gathering node based on the token locations.
Later this work has been  extended to consider any number of agents  \cite{token2,speed}.
Flocchini et al. \cite{token2} showed that if one token is available for each agent, the lower bound on the space complexity per agent is $\Omega (\log k + \log \log n)$ bits, where $k$ is the number of agents and $n$ is the number of nodes.
Later, Gasieniec et al. \cite{speed} proposed the asymptotically space-optimal algorithm for uni-directional ring networks.
Barriere et al. \cite{direction} considered the relationship between the gathering problem and the leader agent election problem.
They showed that the gathering problem and the leader agent election problem are solvable under only the assumption that the ring has sense of direction and the numbers of nodes and agents are relatively prime.

A fault tolerant gathering problem is considered in \cite{token4, black2}.
Flocchini et al. \cite{token4} considered  the gathering problem when tokens fail and showed  that 
knowledge of $n$ (number of agents) allows better time complexity than knowledge of $k$ (number of agents).
Dobrev et al. \cite{black2} considered  the gathering problem for the case that there exists a dangerous node, called a black hole.
A black hole  destroys any agent that visits there.
They showed that it is impossible for all agents to gather and they considered how many agents can survive and gather.

A randomized algorithm to solve the gathering problem is shown in \cite{Mr.Kawai}.
Kawai et al. considered the gathering problem for multiple agents under the assumption that agents know neither $k$ nor $n$,
and proposed a randomized algorithm to solve the gathering problem with  high probability in $O(kn)$ total moves.

\subsection{Organization}
The paper is organized as follows. 
Section \ref{model} presents the system models and the problem to be solved.
In Section \ref{sec:deterministic} we consider the first model, that is, the algorithm is deterministic and each agent has a distinct ID.
In Section \ref{sec:randomized} we consider the second model, that is, the algorithm is randomized and agents are anonymous.
In Section \ref{sec:DeterAnonymous} we consider the third model, that is, the algorithm is deterministic and agents are anonymous.
Section \ref{conclusion} concludes the paper.

\section{Preliminaries}
\label{model}
\subsection{Network model}
\label{network}
A {\em unidirectional ring network} $R$ is a tuple $R = (V,L)$, where $V$ is a set of nodes and $L$ is a set of unidirectional communication links. We denote by $n$ ($=|V|$) the number of nodes. Then, ring $R$ is defined as follows.
\begin{itemize}
\item $V =\{v_0,v_1,\ldots ,v_{n-1}\}$ 
\item $L =\{(v_i,v_{(i+1)\bmod n})~|~0\leq i\leq n-1\}$ 
\end{itemize}
We define the direction from $v_i$ to $v_{i+1}$ as the {\em forward} direction, and the direction from $v_{i+1}$ to $v_{i}$ as the  {\em backward} direction.
In addition, we define the $i$-th ($i\neq 0$) forward (resp., backward) agent ${a_h'}$ of  agent $a_{h}$ as 
the agent that exists in the $a_h$'s forward (resp., backward) direction and there are $i-1$ agents between $a_h$ and $a_{h'}$.
Moreover, we call the $a_h$'s 1-st forward and backward agents \textit{neighboring agents} of $a_h$ respectively.

In this paper, we assume nodes are anonymous, i.e., they do not have IDs. Every node $v_i\in V$ has a whiteboard that  agents on node $v_i$ can read from and write on the whiteboard of $v_i$. 
We define $W$ as a set of all states (contents) of a whiteboard.
\subsection{Agent model}
\label{agent}
Let $A=\{a_1,a_2,\ldots ,a_k\}$ be a set of agents. We consider three model variants.

In the first model, we consider agents that are distinct (i.e., agents have distinct IDs) and execute a deterministic algorithm. 
We model an agent $a_h$ as a  finite automaton $(S,\delta,s_{initial},s_{final})$. 
The first element $S$ is the set of the $a_h$'s all states, which includes initial state $s_{initial}$ and final state $s_{final}$.
When $a_h$ changes its state to $s_{final}$, it terminates the algorithm.
The second element $\delta$ is the state transition function. Since we treat deterministic algorithms, $\delta$ is a mapping  $S\times W\rightarrow S\times W\times M$, 
where  $M=\{1,0\}$ represents whether the agent makes a movement or not in the step. The value 1 represents movement to the next node and 0 represents stay at the current node. 
Since rings are unidirectional, each agent moves only to its forward node.
Note that if the state of $a_h$  is $s_{final}$ and the state of its current node's whiteboard is $w_i$,
then $\delta \,(s_{final}, w_i)=(s_{final},w_i,0)$ holds.
In addition, we assume that each agent cannot detect
whether other agents exist at the current node or not.
Moreover, we assume that each agent knows neither the number of nodes $n$ nor agents $k$.
Notice that $S,\delta,s_{initial},$ and $s_{final}$ can be dependent on the agent's ID.

In the second model, we consider agents that are anonymous (i.e., agents have no IDs) and execute a randomized algorithm.
We model an agent similarly to the first model except for state transition function $\delta$. 
Since we treat randomized algorithms, $\delta$ is a mapping  $S\times W\times R\rightarrow S\times W\times M$, where $R$ represents a set of random values. 
Note that if the state of some agent is $s_{final}$ and the state of its current node's whiteboard is $w_i$,
then $\delta \,(s_{final},w_i,R)=(s_{final},w_i,0)$ holds. 
In addition, we assume that each agent cannot detect
whether other agents exist at the current node or not,
but  we assume that each agent knows the number of agents $k$. 
Notice that all the agents are modeled by the same state machine since they are anonymous.

In the third model, we consider agents that are anonymous and execute a deterministic algorithm.
We also model an agent similarly to the first model.
We assume that each agent knows the number of agents $k$. 
Note that  all the agents are modeled by the same state machine.

\subsection{System configuration}
\label{sec:configuration}
In an agent system, (global) {\em configuration} $c$ is defined as a product of
states of agents, states of nodes (whiteboards' contents), and locations of agents.
 We define $C$ as a set of all configurations. In initial configuration $c_0\in C$, we assume that no pair of agents stay at the same node. We assume that each node $v_j$ has boolean variable $v_j.initial$ at the whiteboard that indicates existence of agents in the initial configuration. 
If there exists an agent on  node $v_j$ in the initial configuration, the value of $v_j.initial$ is true. Otherwise, the value of $v_j.initial$ is false.

Let $A_i$ be an arbitrary non-empty set of agents. When configuration $c_i$ changes to $c_{i+1}$ by the step of every agent in $A_i$, we denote the transition by $c_i \xrightarrow{A_i} c_{i+1}$.
In one atomic step, each agent $a\in A_j$ executes the following series of events:
1) reads the contents of  its current  node's whiteboard, 2) executes local computation, 
3) updates the contents of  the node's whiteboard, and 4) moves to the next node or stays at the current node.
We assume that agents move instantaneously, that is, agents always exist at nodes (do not exist at links). 
If multiple agents at the same node are included in $A_i$, the agents take steps in an arbitrary order.
When $A_i = A$ holds for every $i$, all
agents take steps every time. This model is called the {\em synchronous model}.
Otherwise, the model is called the {\em asynchronous model}.
In this paper, we consider the asynchronous system.

If sequence of configurations $E = c_0,c_1,\ldots $ satisfies $c_i \xrightarrow{A_i} c_{i+1}$  $(i\geq 0)$, $E$ is called an $execution$ starting from  $c_0$ by schedule $A_1, A_2,\ldots $ .
We consider only fair schedules, where every agent appears infinitely often.
Execution $E$ is infinite, or ends in final configuration $c_{final}$ where every agent's state is $s_{final}$.

\subsection{Partial gathering problem}
\label{partial}

The requirement of the partial gathering problem is that, for a given integer  $g$, each agent should move to a node and terminate so that at least $g$ agents should meet
at every node where an agent terminates. Formally, we define the $g$-partial gathering problem as follows.

\begin{definition}\label{teigi:partial}
Execution $E$ solves the $g$-partial gathering problem when the following conditions hold:
\begin{itemize}
\item Execution $E$ is finite (i.e., all agents terminate).
\item In the final configuration, all agents are in the finial states, and for any node $v_j$ where an agent terminates, there exist at least $g$ agents on $v_j$. 

\end{itemize}
\end{definition}

For the $g$-partial gathering problem, we have the following lower bound on the total number of agent moves.
This lemma holds in both deterministic and randomized algorithms.

\begin{theorem}
\label{lower}

The total number of agent moves required to solve the $g$-partial gathering problem is   $\Omega (gn)$ if $g\ge 2$. 
\end{theorem}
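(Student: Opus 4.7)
The plan is to exhibit a specific initial configuration on which every algorithm (deterministic or randomized) must make $\Omega(gn)$ total moves. I would choose an integer $k$ with $g \mid k$ and $k \leq n$, set $d = n/k$, and place the $k$ agents at the uniformly spaced initial positions $0, d, 2d, \ldots, (k-1)d$. Consider any terminating execution that solves the $g$-partial gathering problem on this configuration. In its final configuration, the agents are partitioned into groups $\mathcal{G}_1, \ldots, \mathcal{G}_m$ meeting at distinct terminating nodes $u_1, \ldots, u_m$ with $g_l := |\mathcal{G}_l| \geq g$ and $\sum_l g_l = k$. Since the ring is unidirectional, the number of moves of an agent starting at position $p$ and ending at $u_l$ is at least the forward distance $(u_l - p) \bmod n$.

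The main technical step is a per-group lower bound. Fix a group $\mathcal{G}_l$ and let $t_1, \ldots, t_{g_l}$ be the forward distances from its members' initial positions to $u_l$. The key observation is that every initial position is a multiple of $d$, so all of the $t_i$ share the common residue $u_l \bmod d$; at the same time the $t_i$ are pairwise distinct because the initial positions are distinct. Consequently, when the $t_i$ are sorted in increasing order, consecutive values differ by at least $d$, which immediately yields $\sum_{i=1}^{g_l} t_i \geq d \cdot g_l(g_l-1)/2$.

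Summing over groups and using $g_l \geq g$ (so that $g_l(g_l-1) \geq (g-1)g_l$), the total number of moves is at least $(d/2)\sum_l g_l(g_l-1) \geq (d(g-1)/2)\sum_l g_l = n(g-1)/2$, which is $\Omega(gn)$ since $g \geq 2$. Because the bound holds deterministically on every execution of every algorithm, it applies equally to the expected move count of any randomized algorithm, which covers all three models considered in the paper.

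The main obstacle is getting the per-group lower bound right. The short argument above hinges on the "common residue modulo $d$" observation, which is available only because the adversarial configuration has perfectly uniform spacing. Without it, one would be drawn into a more delicate exchange argument showing that an optimal partition uses cyclically consecutive agents, which is where the bookkeeping becomes tricky; the residue trick bypasses this entirely.
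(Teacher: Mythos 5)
Your proposal is correct and follows essentially the same route as the paper: the same uniformly spaced initial configuration, the same key per-group observation that the sorted forward distances to a gathering node must differ by at least $n/k$ (the paper states it directly as ``the $i$-th smallest number of moves is at least $(i-1)n/k$'', which is exactly your residue-mod-$d$ argument), and the same final bound $n(g-1)/2$. The only difference is cosmetic bookkeeping — you sum the full arithmetic series and use $g_l(g_l-1)\ge (g-1)g_l$, whereas the paper truncates each group's sum after $g$ terms and uses $k\ge gx$ — so no substantive divergence.
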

\begin{proof}
We consider an initial configuration such that all agents are scattered evenly 
(i.e.,  all the agents have the same distances to their nearest agents). 
We assume $n=ck$ holds for some positive integer $c$. Let $V'$ be the set of nodes where agents exist in the final configuration, and let $x=|V'|$. Since at least $g$ agents meet at $v_j$ for any $v_j\in V'$, we have $k\ge gx$.

For each $v_j\in V'$, we define $A_j$ as the set of agents that meet at $v_j$ and $T_j$ as the total number of moves of agents in $A_j$. 
Then, among agents in $A_j$, the $i$-th smallest number of moves to get to $v_j$ is at least $(i-1)n/k$. Hence, we have
\begin{eqnarray*}
T_j &\ge& \sum_{i=1}^{|A_i|}(i-1)\cdot \frac{n}{k}\\
&\ge& \sum_{i=1}^g (i-1)\cdot \frac{n}{k} + (|A_j|-g)\cdot \frac{gn}{k}\\
    &=& \frac{n}{k}\cdot \frac{g(g-1)}{2} + (|A_j|-g)\cdot \frac{gn}{k}.
\end{eqnarray*}
Therefore, the total number of moves is at least
\begin{eqnarray*}
T &=&   \sum_{v_j\in V'}T_j \\
  &\ge& x\cdot \frac{n}{k}\cdot \frac{g(g-1)}{2} + (k-gx)\cdot \frac{gn}{k}\\
  &=& gn-\frac{gnx}{2k}(g+1).
\end{eqnarray*}
Since $k\ge gx$ holds, we have
\[
T\ge \frac{n}{2} (g-1).
\]
Thus, the total number of moves is   at least $\Omega (gn)$.
\end{proof}

\section{The First Model: A Deterministic Algorithm for Distinct Agents}
\label{sec:deterministic}

In this section, we propose a deterministic algorithm to solve the $g$-partial gathering problem for distinct agents (i.e., agents have distinct IDs). 
The basic idea is that agents elect a leader and then the leader instructs other agents which nodes they meet at. However, since $\Omega(n\log k)$ total moves are  required to elect one leader \cite{Mrs.Suzuki},
this approach cannot lead to the $g$-partial gathering in asymptotically optimal total moves (i.e., $O(gn)$). 
To achieve the partial gathering in $O(gn)$ total moves,
we elect multiple agents as leaders by executing the leader agent election partially. 
By this behavior, the number of moves for the election can be bounded by $O(n\log g)$.
In addition, we show that the total number of moves for agents to move to their gathering nodes by leaders' instruction is
$O(gn)$.
Thus,
our algorithm solves the $g$-partial gathering problem in $O(gn)$ total moves.

The algorithm consists of two parts. 
In the first part, multiple  agents are elected as leader agents. 
In the second part, the leader agents instruct the other agents which nodes they meet at, and the other agents move to the nodes by the instruction.

\subsection{The first part: leader election}\label{distinct}
The aim of the first part is to elect leaders that satisfy the following conditions 
called {\em leader election conditions}: 1) At least one agent is elected as a leader, and 2) there exist at least $g-1$ non-leader agents between two leader agents. To attain this goal, we use a traditional leader election algorithm \cite{Perterson}. However, the algorithm in \cite{Perterson} is executed by nodes and the goal is to elect exactly one leader. 
Hence we modify the algorithm to be executed by agents, and then agents elect 
multiple leader agents by executing the algorithm partially.

During the execution of leader election, the states of agents are divided into the following three types:
\begin{itemize}
\item {\em active}: The agent is performing the leader agent election as a candidate of leaders.
\item {\em inactive}: The agent has dropped out from the candidate of leaders.
\item {\em leader}: The agent has been elected as a leader.
\end{itemize}

For an intuitive understanding, we first explain the idea of leader election by assuming that the ring is synchronous and bidirectional. 
Later, the idea is applied to our model, that is, asynchronous unidirectional rings.
The algorithm consists of several phases. 
In each phase, each active agent compares its own ID with IDs of its backward and forward neighboring active agents. 
More concretely, each active agent $a_h$ writes its own ID $id_2$ on the whiteboard of its current node, 
and  moves backward and forward.
Then, $a_h$ observes ID $id_1$ of its backward active agent and $id_3$ of its forward active agent.
After this, $a_h$ decides if it remains active or drops out from  the candidates of leaders.
Concretely, if its own ID $id_2$ is the smallest among the three IDs, $a_h$ remains active (as a candidate of leaders) in the next phase. Otherwise, $a_h$ drops out from the candidate of leaders and becomes inactive. Note that, in each phase, neighboring active agents never remain as candidates of leaders.  
Thus,  at least half active agents become inactive in each phase.
Moreover  from \cite{Perterson}, after executing $j$ phases, there exist at least $2^j-1$ inactive agents between two active agents. 
Thus, after executing $\lceil \log g \rceil$ phases, the following properties are satisfied: 1) At least one agent remains as a candidate of leaders,  and 2) the number of inactive agents between two active agents is at least $g-1$. 
Therefore, all remaining active agents become leaders since they satisfy the leader election conditions. Note that, before executing $\lceil \log g\rceil$ phases, the number of active agents may become one. In this case, the active agent immediately becomes a leader.

In the following, we implement the above algorithm in asynchronous unidirectional rings. 
First, we implement the above algorithm in a unidirectional ring by applying a traditional technique \cite{Perterson}. 
Let us consider the behavior of active agent $a_h$. 
In unidirectional rings, $a_h$ cannot move backward and  cannot observe the ID of its backward active agent. Instead, $a_h$ moves forward until it observes IDs of two active agents. Then, $a_h$ observes IDs of three successive active agents. 
We assume $a_h$ observes $id_1$, $id_2$, $id_3$ in this order. 
Note that $id_1$ is the ID of $a_h$. 
Here this situation is similar to that the active agent with ID $id_2$ observes $id_1$ as its backward active agent and $id_3$ as its forward active agent in bidirectional rings. For this reason, $a_h$ behaves as if it would be an active agent with ID $id_2$ in bidirectional rings. That is, if $id_2$ is the smallest among the three IDs, $a_h$ remains active as a candidate of leaders. Otherwise, $a_h$ drops out from the candidate of leaders and becomes inactive. 
After the phase if $a_h$ remains active as a candidate, it  assigns $id_2$ to its ID and starts the next phase.\footnote{We imitate the way in \cite{Perterson}, but active agent $a_h$ may still use its own ID $id_1$ in the next phase.}

\begin{figure}[t!]
\begin{center}
\centering
\includegraphics[clip,width=130mm]{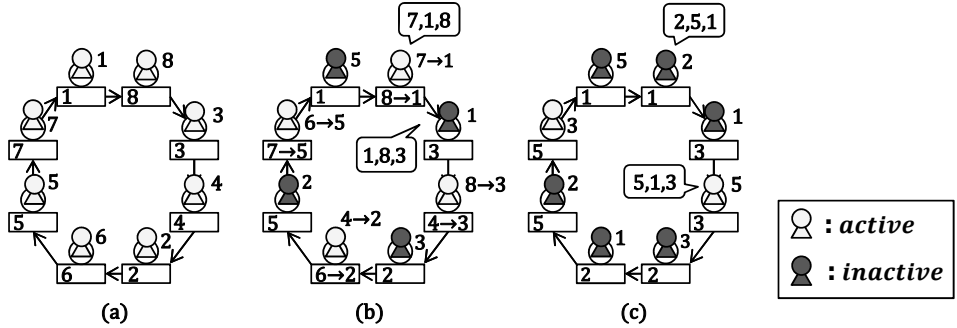}
\caption{An execution example of the leader election part ($k=8,g=3$)}
\label{example}

\end{center}

\end{figure}
For example, consider the initial configuration in Fig.\,\ref{example} (a). 
In the figures, the number near each agent is the ID of the agent and the box of each node represents the whiteboard. 
In the first phase, each agent writes its own ID on the whiteboard of its initial node. Next, each agent moves forward until it observes two IDs, and then the configuration is changed to the one in Fig.\,\ref{example} (b). In this configuration, each agent compares three IDs. 
The agent with ID 1 observes IDs (1, 8, 3), and hence  it drops out from the candidate because the middle ID 8 is not the smallest. 
The agents with IDs 3, 2, and 5 also drop out from the candidates. 
The agent with ID 7 observes IDs (7, 1, 8), and hence  it remains active as a candidate because the middle ID 1 is the smallest. Then, it updates its ID to 1 and starts the next phase. The agents with IDs 8, 4, and 6 also remain active as candidates and similarly update their IDs and start the next phase.
In the second phase, active agents with updated IDs with 1,2,3, and 5 move until they observe two IDs of active agents respectively,
and then the configuration change is changed to the one in Fig.\,\ref{example} (c).
In this configuration, the agent with ID 2 observes IDs (2, 5, 1), and it drops out from the candidate because the middle 
ID is not the smallest.
Similarly, the agent with ID 1 also drops out from the candidate.
On the other hand, the agent with ID 5 observes IDs (5, 1, 3), and it remain active because the middle ID is the smallest.
Similarly, the agent with ID 3 remains active.
Since agents with IDs 5 and 3 execute 2 ($=\lceil \log g\rceil$) phases, they become leaders.

Next, we explain the way to treat asynchronous agents. 
To recognize the current phase, each agent manages {\em a phase number}. 
Initially, the phase number is zero, and it is incremented when each phase is completed. 
Each agent compares IDs with agents that have the same phase number. 
To realize this, when each agent writes its ID on the whiteboard, it also writes its phase number. 
That is, at the beginning of each phase, 
active agent $a_h$ writes a tuple $(phase,id_h)$ on the whiteboard of its current node, 
where $phase$ is the current phase number and $id_h$ is the current ID of $a_h$. 
After that, $a_h$ moves until it observes two IDs with the same phase number as that of $a_h$. 
Note that, some agent $a_h$ may pass another agent $a_i$. In this case, $a_h$ waits until $a_i$ catches  up with $a_h$. We explain the details  later.  Then, $a_h$ decides whether it remains active as a candidate or becomes inactive. If $a_h$ remains active, it updates its own ID. Agents repeat these behaviors until they complete the $\lceil \log g\rceil$-th phase.

{\em Pseudocode.}
The pseudocode to elect leader agents is   given in Algorithm \ref{active} and \ref{basic1}.
All agents start the algorithm with active states, and the behavior of active agent $a_h$ is described in Algorithm \ref{active}. 
We describe $v_j$ by the node that $a_h$ currently exists.
If $a_h$ changes its state to an inactive state or a leader state, $a_h$ immediately moves to the next part and executes the algorithm for an inactive state or a leader state in Section \ref{realization}. 
Agent $a_h$ and node $v_j$ have the following variables:
\begin{itemize}
\item $a_h.id_1,a_h.id_2,$ and $a_h.id_3$ are variables for $a_h$  to store IDs of three successive active agents. 
Agent $a_h$ stores its ID on $a_h.id_1$ and initially assigns its initial ID $a_h.id$ to $a_h.id_1$. 
\item $a_h.phase$ is a variable for  $a_h$ to store its own phase number.
\item $v_j.phase$ and $v_j.id$ are variables for an active agent to write its phase number and its ID. For any $v_j$, initial values of these variables are 0.
\item $v_j.inactive$ is a  variable to represent whether there exists an inactive agent at  $v_j$ or not. That is, agents update the variable to keep the following invariant: If there exists an inactive agent on $v_j$, $v_j.inactive=\textit{true}$ holds, and otherwise $v_j.inactive$$=$$\textit{false}$ holds. Initially $v_j.inactive$ = \textit{false} holds for any $v_j$.
\end{itemize}

In Algorithm \ref{active}, $a_h$ uses procedure  \textit{BasicAction}(), by which agent $a_h$ moves to node $v_{j'}$ satisfying $v_{j'}.phase=a_h.phase$. 
\begin{algorithm}[t!]
\caption{The behavior of active agent $a_h$ ($v_j$ is the current node of $a_h$)}
\label{active}                          
\begin{algorithmic}[1]

\item[\textbf{Variables in Agent $a_h$}]
\item[int $a_h.phase$;] 
\item[int $a_h.id_1$,$a_h.id_2$,$a_h.id_3$;] 

\item[\textbf{Variables in Node $v_j$}]
\item[int $v_j.phase$;] 
\item[int $v_j.id$;]
\item[boolean $v_j.inactive=$ \textit{false};] 

\item[\textbf{Main Routine of Agent $a_h$}] 

\STATE  $a_h.phase=1$ 
\STATE  $a_h.id_1=a_h.id$ 
\STATE  $v_j.phase=a_h.phase$
\STATE  $v_j.id=a_h.id$
\STATE \textit{BasicAction}()
\IF{$(v_j.phase=a_h.phase)\land (v_j.id=a_h.id_1)$}
\STATE change its state to a leader state 
\ENDIF
\STATE  $a_h.id_2=v_j.id$
\STATE \textit{BasicAction}()
\STATE  $a_h.id_3=v_j.id$ 
\IF{$a_h.id_2\geq \min (a_h.id_1,a_h.id_3)$}
\STATE  $v_j.inactive=true$ 
\STATE change its state to an inactive state
\ELSE
\IF {$a_h.phase=\lceil \log g \rceil $}
\STATE change its state to a leader state
\ELSE
\STATE $a_h.phase = a_h.phase+1$
\STATE  $a_h.id_1=a_h.id_2$
\ENDIF
\STATE return to step 3

\ENDIF
\end{algorithmic}
\end{algorithm}
The pseudocode of \textit{BasicAction}() is described in Algorithm \ref{basic1}. In \textit{BasicAction}(), the main behavior of $a_h$ is to move to node $v_{j'}$ satisfying $v_{j'}.phase=a_h.phase$. To realize this, $a_h$ skips nodes where  no agent initially exists (i.e., $v_j.initial=$ \textit{false}) or an inactive agent whose phase number is not equal to $a_h$'s phase number currently exists (i.e., $v_j.inactive=true$ and $a_h.phase \neq v_j.phase$), and continues to move until it reaches a node where some active agent starts the same phase (lines 2 to 4).
Note that during the execution of the algorithm, it is possible that $a_h$ becomes the only one candidate of leaders. In this case, $a_h$ immediately becomes a leader (lines 6 to 8 of Algorithm \ref{active}).

In the following, we explain the details of the treatment of asynchronous agents.
Since agents move asynchronously, agent $a_h$ may pass some active agents. To wait for such agents, agent $a_h$ makes some additional behavior (lines 5 to 8).
First, consider  the transition from the configuration of Fig.\,\ref{pass1} (a) to that of Fig.\,\ref{pass1} (b) and  consider the case that $a_h$ passes $a_b$ with a smaller phase number. Let $x=a_h.phase$ and $y=a_b.phase$ ($y<x$).
In this case, $a_h$ detects the passing when it reaches a node $v_c$ such that $a_h.phase>v_c.phase$ holds.
Hence, $a_h$ can wait for $a_b$ at $v_c$. Since $a_b$
increments $v_c.phase$ or becomes inactive at  $v_c$, $a_h$ waits at $v_c$ until either $v_c.phase=x$ or $v_c.inactive=true$ holds (line 6). 
After $a_b$ updates the value of either $v_c.phase$ or $v_c.inactive$, $a_h$ resumes its behavior.

\if()
\begin{figure}[t!]
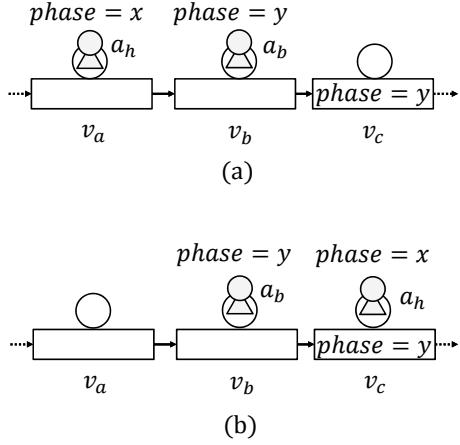

\centering
\subfigure{\includegraphics[width=5cm]{passex1.ps}}
\hspace{1mm}
\subfigure{\includegraphics[width=5cm]{passex2.ps}}
\caption{The first example of agent  $a_h$ that passes other agents (e.g, $a_b$)}
\label{pass1}
\end{figure}
\fi

\begin{algorithm}[t!]
\caption{Procedure \textit{BasicAction}() for $a_h$}         
\label{basic1}                          
\begin{algorithmic}[1]
\STATE move to the forward node
\WHILE {$(v_j.initial=\textit{false})\vee (v_j.inactive=true \land a_h.phase\neq v_j.phase)$}
\STATE move to the forward node
\ENDWHILE
\IF{$a_h.phase > v_j.phase$}
\STATE wait until $v_j.phase=a_h.phase$ or $v_j.inactive=true$
\STATE return to step 2
\ENDIF
\end{algorithmic}
\end{algorithm}

Next, consider the case that $a_h$ passes $a_b$ with the same phase number. In the following, we show that agents can treat this case without any additional procedure. Note that, because $a_h$
increments its phase number after it collects two other IDs, this case happens only when $a_b$ is a forward active agent of $a_h$. Let $x=a_h.phase=a_b.phase$. Let $a_h$, $a_b$, $a_c$, and $a_d$
are successive agents that start phase $x$. Let $v_h$, $v_b$, $v_c$, and $v_d$ are nodes where $a_h$, $a_b$, $a_c$, and $a_d$ start phase $x$, respectively. Note that $a_h$ (resp., $a_b$) decides whether it becomes
inactive or not at $v_c$ (resp., $v_d$). We consider further two cases depending on the decision of $a_h$ at $v_c$. 
First, in  the transition  from the configuration of Fig.\,\ref{pass2} (a) to that of Fig.\,\ref{pass2} (b), consider the case $a_h$ becomes inactive at $v_c$. 
In this case, since $a_h$ does not update $v_c.id$, $a_b$ gets $a_c.id$ at $v_c$ and moves to $v_d$ and then decides its behavior at $v_d$. 
Next, in  the transition  from the configuration of Fig.\,\ref{pass3} (a) to that of Fig.\,\ref{pass3} (b), consider the case $a_h$ remains active at $v_c$. In this case, $a_h$ increments its phase (i.e., $a_h.phase=x+1$) and
updates $v_c.phase$ and $v_c.id$. Note that, since $a_h$ remains active, $a_h.id_2=a_b.id$ is the smallest among the three IDs. Hence, $v_c.id$ is  updated to $a_b.id$ by $a_h$.
Then, $a_h$ continues to move until
it reaches $v_d$. 
If $a_h$ reaches  $v_d$ before $a_b$ reaches $v_d$, both $v_d.phase<a_h.phase$ and $v_d.inactive=$ \textit{false} hold at $v_d$.
Hence, $a_h$ waits until $a_b$ reaches $v_d$. 
On the other hand when $a_b$ reaches $v_c$,
since $a_b.phase<v_c.phase$ holds, $a_b$ continues to move 
without waiting for the update of $v_c.phase$.
In addition since $a_h$ has updated $v_c.id$,
$a_h$ sees $v_c.id=a_b.id$. 
Thus since  $a_b.id_1=a_b.id_2$ holds, $a_b$ becomes inactive when it reaches $v_d$. After that, $a_h$ resumes the movement.

\begin{figure}[t!]
\begin{center}
\centering
\includegraphics[clip,width=80mm]{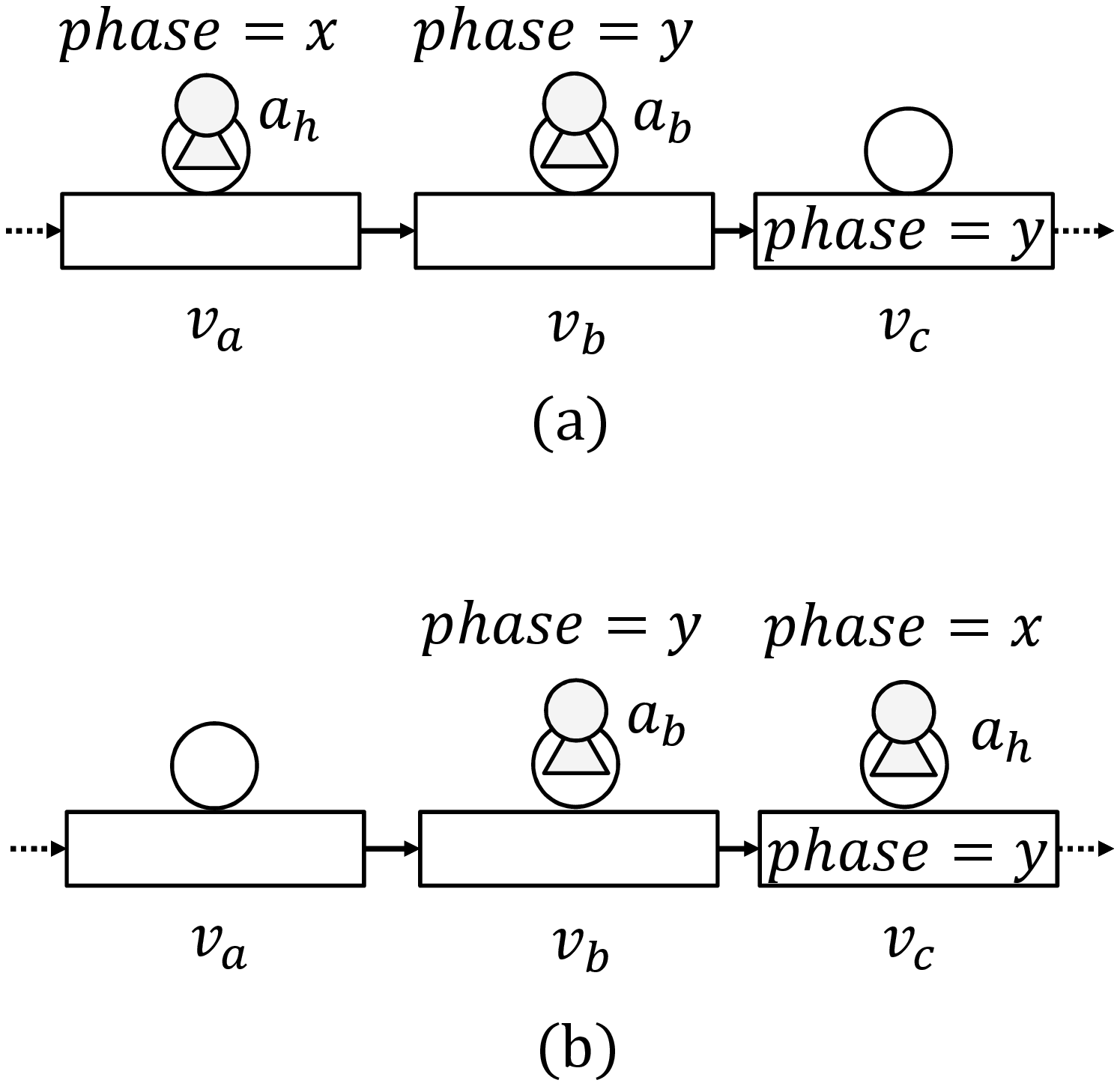}
\caption{The first example of agent  $a_h$ that passes other agents (e.g, $a_b$)}
\label{pass1}
\end{center}
\end{figure}

\begin{figure}[t!]
\begin{center}
\centering
\includegraphics[clip,width=80mm]{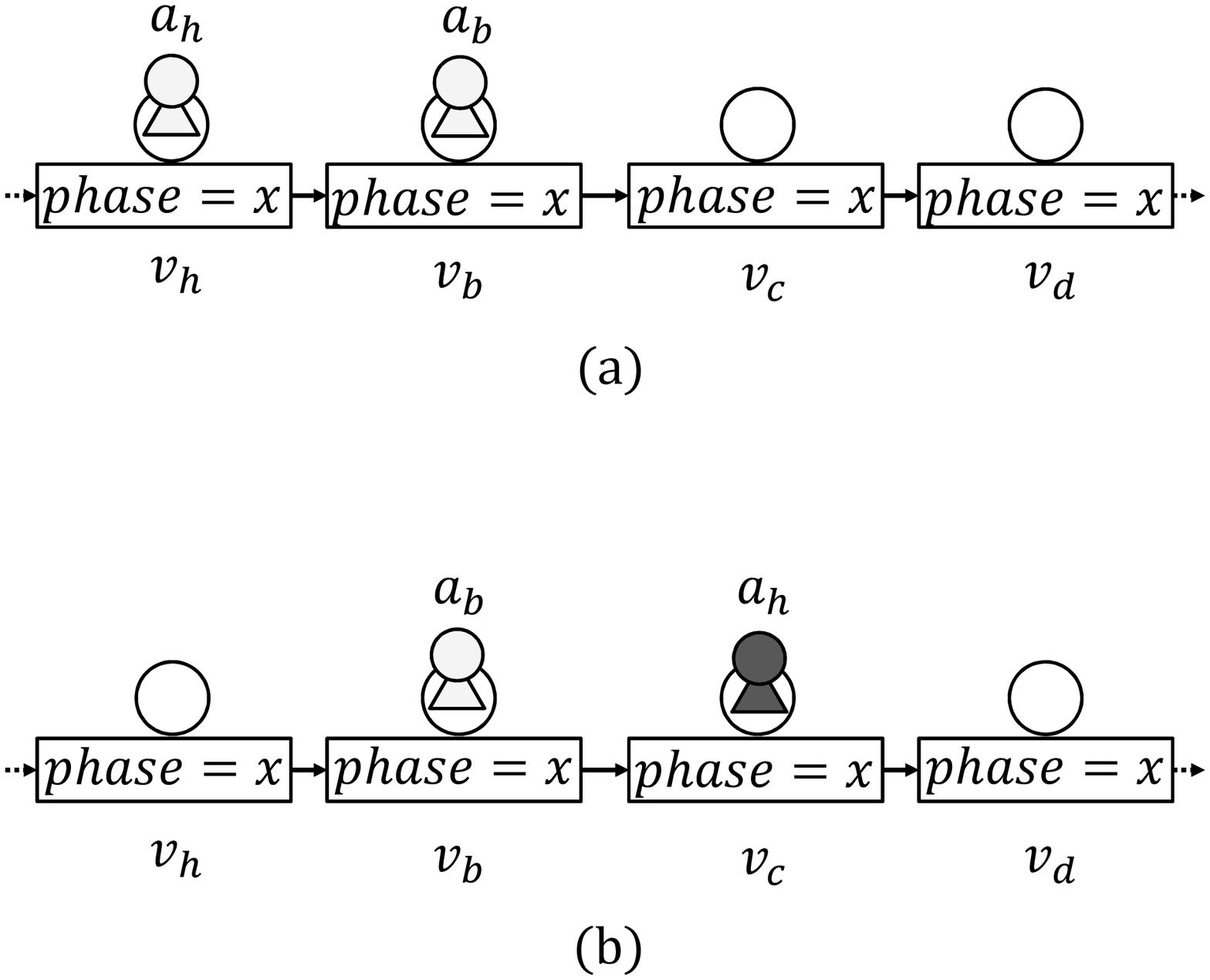}
\caption{The second example of agent $a_h$ that passes other agents (e.g., $a_b$)}
\label{pass2}
\end{center}
\end{figure}

We have the following lemma about Algorithm \ref{active} similarly to \cite{Perterson}.

\begin{lemma}
\label{leaders}
Algorithm \ref{active} eventually terminates, and the configuration satisfies the following properties.
\begin{itemize}
\item There exists at least one leader agent.
\item There exist at least $g-1$ inactive agents between two leader agents. 
\end{itemize}
\end{lemma}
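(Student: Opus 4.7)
The plan is to reduce the correctness of Algorithm 1 to that of the synchronous bidirectional version of Peterson's algorithm sketched before the pseudocode, by showing that despite asynchrony and unidirectional movement, each active agent effectively compares its own ID with those of its two forward consecutive active agents of the same phase, and that the outcome of each phase is the one Peterson's analysis predicts.

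The central invariant I would prove by induction on the phase number $p$ is the following: at the moment when all agents that will ever complete phase $p$ have completed it, (i) the surviving active agents are arranged around the ring in the same cyclic order as the original active agents, and (ii) between any two cyclically consecutive surviving active agents there are at least $2^{p}-1$ inactive agents. The base case $p=0$ is trivial. For the inductive step, once I have established that during phase $p$ each active agent $a_h$ correctly reads in $(a_h.id_1, a_h.id_2, a_h.id_3)$ the IDs of three cyclically consecutive phase-$p$ active agents (itself and the next two), Peterson's argument gives that only the $a_h$ whose middle ID is the local minimum among its triple survives, so between any two survivors there must be at least one former active agent that just became inactive; this doubles the inactive gap and preserves the cyclic order. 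After $\lceil \log g\rceil$ phases this yields at least $2^{\lceil \log g\rceil}-1 \ge g-1$ inactive agents between consecutive leaders, and the agent carrying the globally smallest $id_2$ in every phase never drops out, so at least one leader exists; the early-termination branch at lines 6--8 handles the degenerate case in which only one active agent remains before phase $\lceil \log g\rceil$.

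The hard part will be justifying the ``correct reading'' of the ID triples under asynchrony, which is exactly the passing analysis sketched in the text. I would split this into the two cases of the informal discussion. When $a_h$ passes a slower active agent $a_b$ with $a_b.phase < a_h.phase$, the guard at lines 5--7 of \textit{BasicAction} makes $a_h$ halt at the node $v_c$ where $a_b$ will next act, and wait until $a_b$ either writes $v_c.phase = a_h.phase$ or sets $v_c.inactive = \textit{true}$; in both outcomes $a_h$'s subsequent reads see a consistent phase-$p$ configuration. When $a_h$ passes a same-phase forward active agent $a_b$, I would verify case-by-case that either $a_h$ becomes inactive at $v_c$ without altering $v_c.id$ (so $a_b$ reads the original $a_c.id$ and proceeds normally) or $a_h$ survives, overwrites $v_c.id$ with $a_b.id$, and waits at $v_d$; the overwrite forces $a_b.id_1 = a_b.id_2$, so $a_b$ becomes inactive at $v_d$ and the invariant still holds for phase $p+1$. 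Combining these with the fact that each agent performs at most $\lceil \log g\rceil$ phases, each phase involves a bounded number of moves followed by at most one wait whose termination is guaranteed by fairness, gives overall termination and the two required properties.
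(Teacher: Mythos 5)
Your proposal is correct and follows essentially the same route as the paper's own proof: an induction on the phase number showing at least $2^{j}-1$ inactive agents between consecutive active agents after $j$ phases, uniqueness of IDs guaranteeing that at least one candidate survives each phase (with the lines 6--8 branch covering the single-candidate case), and the observation that a waiting agent is eventually released because the passed agent must update $v_j.phase$ or $v_j.inactive$. Your extra case analysis of the two passing scenarios simply formalizes the asynchrony discussion the paper gives informally before the lemma, so it is a welcome elaboration rather than a different approach.
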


\begin{proof}
At first, we show that Algorithm \ref{active} eventually terminates.
After executing $\lceil \log g \rceil$ phases, agents that have dropped out from the candidates of leaders are inactive states, and agents that remain active changes their states to leader states. 
In addition if  agent $a_h$ passes another agent $a_{h'}$, 
$a_h$ waits for $a_{h'}$ at some node $v_j$ until either $v_j.phase$ or $v_j.inactive$ is updated 
(lines 5 to 8 in Algorithm \ref{basic1}).
Since the passed agent $a_{h'}$ eventually reaches $v_j$ and updates  either $v_j.phase $ or $v_j.inactive$,
it does not happen that $a_h$ waits at $v_j$ forever.
Moreover, by the time executing $\lceil \log g\rceil$ phases, if there exists exactly one active agent and the other agents are inactive, the active agent changes its state to a leader state. 
Therefore, Algorithm \ref{active} eventually terminates.
In the following, we show the above two properties.

First, we show that there exists at least one leader agent. From Algorithm \ref{active}, in each phase if $a_h.id_2$ is 
the smallest of the three IDs, $a_h$ remains active. Otherwise, $a_h$ becomes inactive. 
Since each agent uses a unique ID, if there exist at least two active agents in some phase $i$, at least one agent remains active after executing the phase $i$.
Moreover, from lines 6 to 8 of Algorithm \ref{active}, if there exists exactly one candidate of leaders and the other agents remain inactive, the candidate becomes a leader. Therefore, there exists at least one leader agent.

Next, we show that there exist at least $g-1$ inactive agents between two leader agents. At first, we show that after executing $j$ phases, there exist at least $2^j -1$ inactive agents between two active agents.  
We show it by induction on the phase number and by using  the fact that in each phase if   agent $a_h$ remains as a candidate of leaders, then its backward and forward active agents drop out from candidates of leaders.
For the case $j = 1$, there exists at least $1 = 2^1 - 1$ inactive agents between two active agents. For the case $j = l$, we assume that there exist at least $2^l -1$ inactive agents between two active agents. 
Then, after executing $l+1$ phases, since at least one of neighboring active agents becomes inactive, the number of inactive agents between two active agents is at least $(2^l -1) +1 + (2^l - 1)$ = $2^{l+1} - 1$. 
Hence, we can show that after executing $j$ phases, there exist at least $2^j - 1$ inactive agents between two active agents. Therefore, after executing $\lceil \log g\rceil$ phases, there exist at least $g-1$ inactive agents between two leader agents.
\end{proof}
\begin{figure}[t!]
\begin{center}
\centering
\includegraphics[clip,width=80mm]{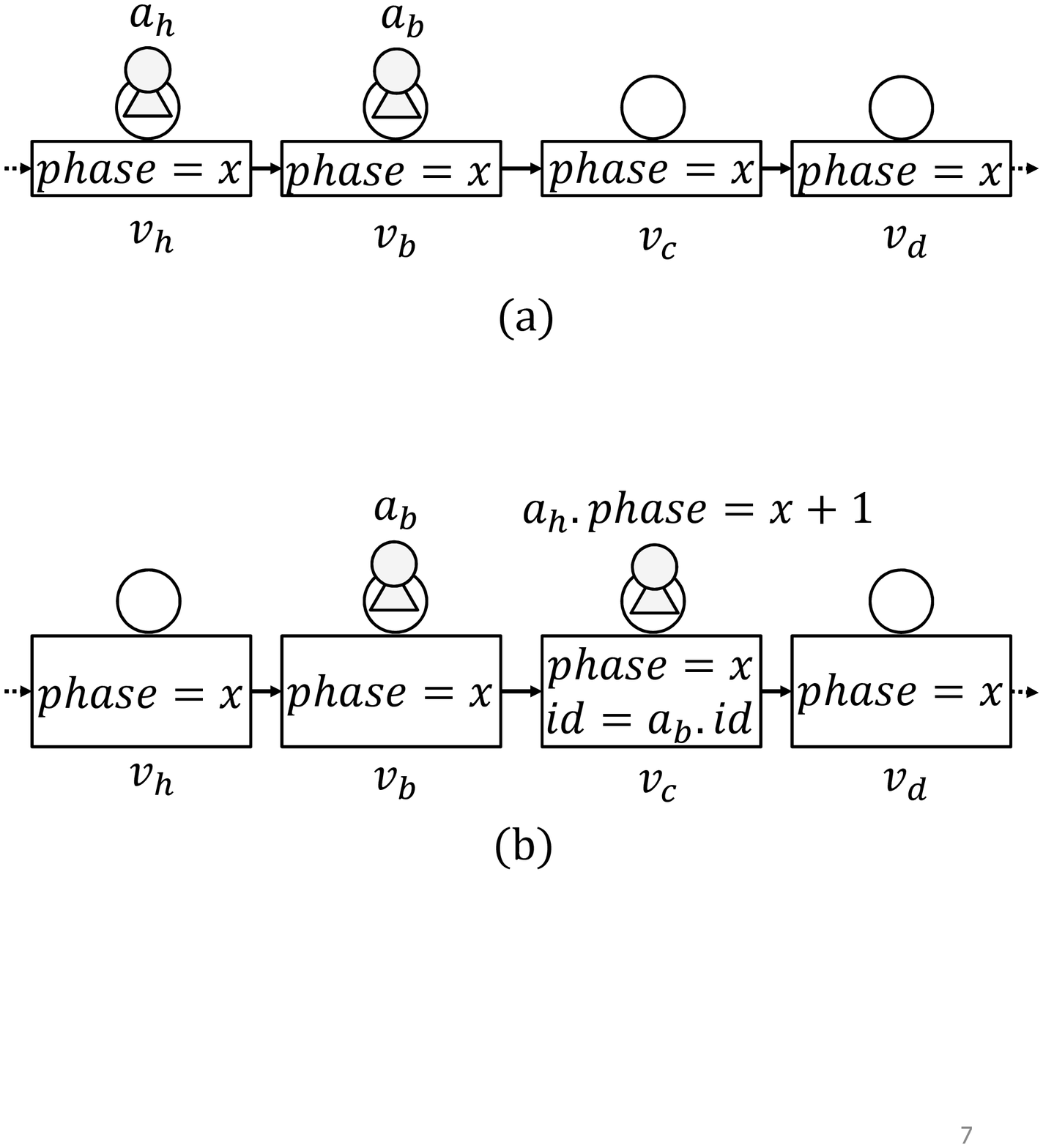}
\caption{The third example of  agent $a_h$ that passes other agents (e.g, $a_b$)}
\label{pass3}
\end{center}
\end{figure}

In addition, we have the following lemma similarly to \cite{Perterson}.

\begin{lemma}
\label{hodai:deterministic}
The total number of agent moves to execute Algorithm \ref{active} is   $O(n\log g)$. 
\end{lemma}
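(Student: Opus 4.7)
The plan is to decompose the execution of Algorithm \ref{active} into phases, bound the total moves in each phase by $O(n)$, and multiply by the number of phases. By the proof of Lemma \ref{leaders}, every active agent completes at most $\lceil \log g \rceil$ phases before changing state to inactive or leader, and agents in those terminal states make no further moves in Algorithm \ref{active}; so it suffices to bound the total moves by the active agents within a single phase and sum over $\lceil \log g \rceil$ phases.

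Fix a phase $j$ and let $m_j$ be the number of agents active at the start of phase $j$, with phase-$j$ starting positions $u_1, \ldots, u_{m_j}$ in forward order around the ring (the nodes where these agents execute line 3 of Algorithm \ref{active}). In phase $j$, the agent starting at $u_l$ invokes \textit{BasicAction}() twice to read the IDs of the next two active agents at phase $j$; by inspection of \textit{BasicAction}() these reads occur at $u_{l+1}$ and $u_{l+2}$ (indices modulo $m_j$), so the agent's total forward distance in phase $j$ is at most $d_l + d_{l+1}$, where $d_l$ is the forward arc length from $u_l$ to $u_{l+1}$. Since $\sum_{l=1}^{m_j} d_l = n$, summing yields $\sum_l (d_l + d_{l+1}) = 2n$, so the active agents collectively make at most $2n$ moves in phase $j$. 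Summed over the $\lceil \log g \rceil$ phases this gives the claimed $O(n \log g)$ total.

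The main obstacle is justifying this 2-arc charge in the asynchronous model, where the agent at $u_l$ can arrive at $u_{l+1}$ before the agent starting there has overwritten $u_{l+1}.phase$ with $j$, or can even overtake an active agent of the same phase. In the former case, lines 5 to 8 of \textit{BasicAction}() force the agent to wait at $u_{l+1}$ without making further moves, preserving the budget. The passing cases have already been analyzed in the excerpt; in each one the agent records its two IDs and then either becomes inactive or advances to the next phase without exceeding $d_l + d_{l+1}$. A final corner case is when an agent becomes the unique active candidate in some phase: the first \textit{BasicAction}() call then returns it to its own starting node after at most $n$ moves, triggering the leader transition at lines 6 to 8 of Algorithm \ref{active}, which adds only $O(n)$ to the phase and leaves the total bound $O(n \log g)$ intact.
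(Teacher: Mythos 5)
Your argument is correct and is essentially the paper's proof: the per-agent charge of $d_l+d_{l+1}$ in each phase is just the paper's observation that each link is traversed by (at most) two active agents per phase, giving $O(n)$ moves per phase and $O(n\log g)$ over the $\lceil\log g\rceil$ phases. Your added treatment of the asynchronous waiting/passing cases and the single-remaining-candidate case only makes explicit details the paper leaves implicit (waiting costs no moves), so the approaches coincide.
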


\begin{proof}
In each phase, each active agent moves until it observes two IDs of active agents. 
This costs $O(n)$ moves in total  because each communication link is passed by two agents. 
Since agents execute $\lceil \log g \rceil $ phases, we have the lemma.
\end{proof}

\subsection{The second part: movement to gathering nodes}
\label{realization}

The  second part achieves the $g$-partial gathering by using leaders elected in the first part. 
Let leader nodes (resp., inactive nodes) be the nodes where agents become leaders (resp., inactive agents) in the first part.
In this part, states of agents are divided into the following three types:
\begin{itemize}
\item $leader$: The agent instructs inactive agents where they should move.
\item $inactive$: The agent waits for the leader's instruction.
\item $moving$: The agent moves to its gathering node.
\end{itemize}

 The idea of the algorithm is to divide agents into groups each of which consists of at least $g$ agents.
Concretely, first each leader agent $a_h$ writes 0 on the whiteboard of the current node (i.e., the leader node). 
Next, $a_h$ moves to the next leader node, that is, the node where 0 is already written on the whiteboard.
While moving, whenever $a_h$ visits an inactive node $v_j$, it counts the number of inactive nodes that $a_h$ has
visited. 
If the number plus one is not a multiple of $g$, $a_h$ writes 0 on the whiteboard.
Otherwise, $a_h$ writes 1 on the whiteboard.
These numbers are used to indicate whether the node is a gathering node or not.
The number 0 means that agents do not meet at the node and the number 1 means that at least $g$ agents meet at the node. 
When $a_h$ reaches the next leader node, it changes its own state to a moving state, and we explain the behavior of 
moving agents later. 
For example, consider the configuration in Fig.\,\ref{fig:moving} (a). In this configuration, agents $a_1$ and $a_2$ are leader agents. First, $a_1$ and $a_2$ write 0 on their current whiteboards (Fig.\,\ref{fig:moving} (b)), and then they move and write numbers on whiteboards until they visit the node where 0 is already written on the whiteboard. Then, the system reaches the configuration in Fig.\,\ref{fig:moving} (c).

\begin{figure}[t!]
\begin{center}
\includegraphics[keepaspectratio,width=120mm]{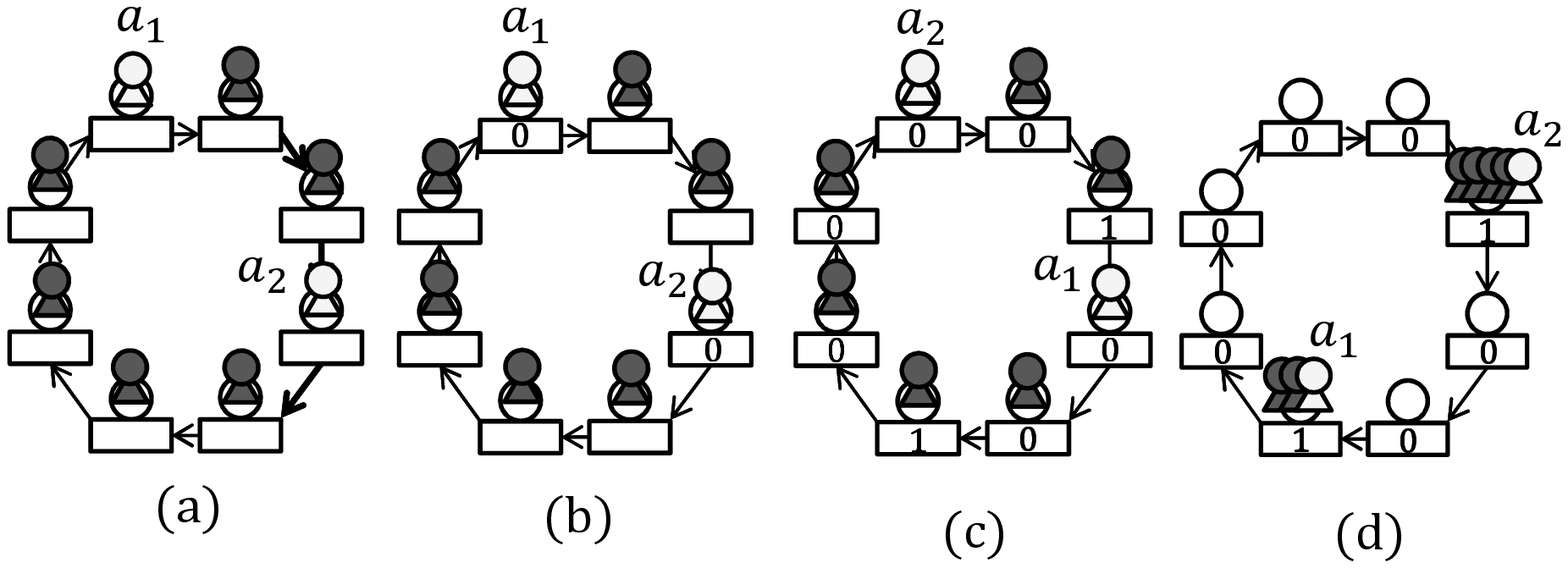}
\caption{The realization of partial gathering ($g=3)$}
\label{fig:moving}
\end{center}
\end{figure}

Each non-leader (i.e., inactive agent) $a_h$ waits at the current node until the value of the whiteboard is updated.
When the value is updated, $a_h$ changes its own state to a moving state.
Each moving agent moves to the nearest node where 1 is written on the whiteboard. 
For example, after the configuration in Fig.\,\ref{fig:moving} (c), each non-leader agent moves to the node where 1 is written on the whiteboard and the system reaches the configuration in Fig.\,\ref{fig:moving} (d). After that, agents can solve the $g$-partial gathering problem.

{\em Pseudocode.}
The pseudocode to achieve the partial gathering is described in Algorithm  \ref{initial} to \ref{moving}.
In this part, agents continue to use $v_j.initial$ and $v_j.inactive$. Remind that $v_j.initial=true$ if and only if there exists an agent at $v_j$ initially. 
In addition,  $v_j.inactive=true$ if and only if there exists an inactive agent at $v_j$. Note that, since each agent becomes inactive or a leader at a node such that there exists an agent initially, agents can ignore and skip every node $v_{j'}$ such that $v_{j'}.initial=\textit{false}$ holds.

At first, the variables needed to achieve the $g$-partial gathering are described in Algorithm \ref{initial}. 
For leader agents instructing inactive agents gathering nodes, agent $a_h$ and node $v_j$ have the following variables:
\begin{itemize}
\item $a_h.count$ is a variable for $a_h$ to count the number of inactive nodes $a_h$ visits 
(The counting is done modulo $g$).
The initial value of $a_h.count$ is 0.
\item $v_j.\textit{isGather}$ is a variable for leader agents to write values to indicate 
whether node $v_j$ is a gathering node or not. 
That is, when a leader agent $a_h$ visits an inactive node $v_j$, $a_h$ writes 1 to $v_j.\textit{isGather}$ to indicate $v_j$ is a gathering node if $a_h.count=0$, and $a_h$ writes 0 to $v_j.\textit{isGather}$ otherwise. 
The initial value of $v_j.\textit{isGather}$ is $\perp$.
\end{itemize}
\begin{algorithm}[t!]
\caption{Initial values needed in the second part ($v_j$ is the current node of agent $a_h$)}
\label{initial}                          
\begin{algorithmic}[1]
\item [\textbf{Variable in Agent $a_h$}]
\item [int $a_h.count=0$;]
\item [\textbf{Variable in Node $v_j$}]
\item [int $v_j.\textit{isGather}=\perp$;]        
\end{algorithmic}
\end{algorithm}

The pseudocode of leader agents is described in Algorithm \ref{leader}. 
Since agents move asynchronously, it is possible that 
there  exists active agents executing the first part and leader agents executing the second part at the same time.
Hence, it may happen that some leader agent $a_h$ may pass some active agent $a_i$.
In this case, $a_h$ waits until $a_i$ catch up with $a_h$ and $a_i$ becomes a leader or inactive. 
More precisely, when leader agent $a_h$ visits the node $v_j$ such that $v_j.initial=true$ and 
$v_j.inactive=$ \textit{false} and $v_j.\textit{isGather}  = \perp$ hold,
it detects that it passes some active agent $a_i$.
This is because $v_j.inactive=true$ should hold if some agent becomes inactive at $v_j$, and $v_j.\textit{isGather}\neq \perp$ holds if some agent becomes leader at $v_j$. 
In this case, $a_h$ waits there until the agent caches up with it and either $v_j.inactive=true$ or $v_j.\textit{isGather}\neq \perp$ holds (lines 8 to 10). When the leader agent updates $v_j.\textit{isGather}$, an inactive agent on node $v_j$ changes to  a moving state (line 17). 
After a leader agent reaches the next leader node, it changes its own state to a moving state (line 22). 
The behavior of inactive agents is described in Algorithm \ref{inactive}.

\begin{algorithm}[t!]
\caption{The behavior of leader agent $a_h$ ($v_j$ is the current node of $a_h$)}
\label{leader}                          
\begin{algorithmic}[1]                  
\STATE $v_j.\textit{isGather}=0$ 
\STATE $a_h.count=a_h.count+1$
\STATE move to the forward node
\WHILE{$v_j.\textit{isGather}=\perp$}
\WHILE {$v_j.initial=$ \textit{false}}
\STATE move to the forward node
\ENDWHILE
\IF{($v_j.inactive=\textit{false})\wedge( v_j.\textit{isGather}=\perp)$}
\STATE wait until $v_j.inactive=true$ or $v_j.\textit{isGather}\neq \perp$
\ENDIF
\IF {$v_j.inactive=true$}
\IF {$a_h.count=0$}
\STATE  $v_j.\textit{isGather}=1$
\ELSE
\STATE  $v_j.\textit{isGather}=0$
\ENDIF
\STATE // an inactive agent at $v_j$ changes to  a moving state
\STATE  $a_h.count=(a_h.count+1)\bmod g$
\STATE move to the forward node
\ENDIF
\ENDWHILE
\STATE change to  a moving state
\end{algorithmic}
\end{algorithm}
\begin{algorithm}[t!]
\caption{The behavior of inactive agent $a_h$ ($v_j$ is the current node of $a_h$)}
\label{inactive}                          
\begin{algorithmic}[1]                  
\STATE wait until $v_j.\textit{isGather}\neq\perp$
\STATE change to  a moving state
\end{algorithmic}
\end{algorithm}

The pseudocode of moving agents is described in Algorithm \ref{moving}.
Moving agent $a_h$ moves to the nearest node $v_j$ such that $v_j.\textit{isGather}=1$ holds. 
When  all agents complete such moves, the $g$-partial gathering problem is solved. 
In asynchronous rings, a moving agent may pass leader agents. 
To avoid this, the moving agent waits until the leader agent catches up with it. 
More precisely, if moving agent $a_h$ visits node $v_j$ such that $v_j.initial=true$ and $v_j.\textit{isGather}  = \perp$ hold, $a_h$ detects that it passed a leader agent. 
Then,   $a_h$ waits there until the leader agent comes and updates $v_j.\textit{isGather}$ (lines 3 to 5).

\begin{algorithm}[t!]
\caption{The behavior of moving agent $a_h$ ($v_j$ is the current node of $a_h$)}
\label{moving}                          
\begin{algorithmic}[1]
\WHILE{$v_j.\textit{isGather}\neq 1$}
\STATE move to the forward node
\IF{$(v_j.initial=\textit{true})\wedge (v_j.\textit{isGather} = \perp)$}
\STATE{wait until $v_j.\textit{isGather}\neq \perp$}
\ENDIF
\ENDWHILE
\end{algorithmic}
\end{algorithm}

We have the following lemma about the algorithm in Section \ref{realization}.
\begin{lemma}
\label{hodai:realization}
After the leader agent election, agents solve the $g$-partial gathering problem in $O(gn)$ total moves.
\end{lemma}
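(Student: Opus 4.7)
The plan is to prove two claims about the algorithm of Section~\ref{realization}: (i) correctness, namely that every terminating node holds at least $g$ agents, and (ii) the total number of moves performed during the second part is $O(gn)$.

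For (i), I would invoke Lemma~\ref{leaders} to describe the configuration entering the second part: at least one leader exists and every pair of consecutive leaders is separated by at least $g-1$ inactive agents. Tracing Algorithms~\ref{leader}, \ref{inactive}, and~\ref{moving}: each leader writes $\textit{isGather}=0$ on its own node and walks forward to the next leader's node, writing $\textit{isGather}=1$ on every $g$-th inactive node it passes and $\textit{isGather}=0$ on the others; each inactive agent becomes moving once its whiteboard is written and walks forward to the nearest node with $\textit{isGather}=1$. The $(jg)$-th inactive of a section is therefore reached by itself and by the $g-1$ inactives strictly following the preceding same-section gathering, yielding at least $g$ agents per gathering node; any leaders or tail inactives forwarded from earlier sections only add to the count. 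I would also argue that the waits at lines~8--10 of Algorithm~\ref{leader} and lines~3--5 of Algorithm~\ref{moving} cannot block forever, because a waited-on agent is always still executing its first-part code and terminates by Lemma~\ref{leaders}.

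For (ii), I would bound leader-state and moving-state moves separately. Each leader traverses its own section exactly once, so the leader-state walks partition the ring edges and total exactly $n$. For moving-state moves, I partition the ring into arcs bounded by consecutive gathering nodes and charge each arc with the sum of distances its contained agents travel to the arc's terminal gathering. A within-section arc contains exactly $g$ agents (the $g-1$ inactives strictly between two consecutive same-section gatherings plus the terminal gathering agent), so its contribution is at most $g$ times the arc length; summed over all within-section arcs this gives $O(gn)$. Cross-section arcs, produced when a section with exactly $g-1$ inactives offers no internal gathering and forwards its leader plus its $g-1$ inactives onward, pick up $O(g)$ extra agents per section spanned together with the length of that section, keeping the per-edge traversal count at $O(g)$; summing therefore preserves the $O(gn)$ bound.

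The main obstacle is exactly the cross-section accounting: a chain of sections each with $m_i=g-1$ can concentrate many agents into a single arc, and one must argue that this growth in ``agents in arc'' is matched by proportional growth in arc length so that each ring edge is crossed by only $O(g)$ moving agents. I would make this precise by attributing each moving agent to the unique arc terminating at its destination gathering, and then verifying the per-edge bound $O(g)$ on the total number of moving-agent crossings; combined with the $n$ leader-state moves, this yields the claimed $O(gn)$ total.
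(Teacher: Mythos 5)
Your correctness half and the leader-move accounting coincide with the paper's: the paper also charges exactly $n$ moves to the leader-state walks and derives ``at least $g$ agents per gathering node'' from Lemma~\ref{leaders}, exactly as you do. The divergence is in the moving-state accounting, and that is where your proposal has a genuine gap. The paper simply asserts that between two consecutive gathering nodes there are at most $2g-1$ moving agents, so each link is crossed by at most $2g$ of them; this tacitly assumes that every leader-to-leader section contains at least $g$ inactive agents and hence at least one gathering node. You correctly observe that Lemma~\ref{leaders} only guarantees $g-1$ inactive agents per section, so gathering-free sections can chain up --- but the resolution you sketch, namely that the per-edge crossing count stays $O(g)$ because ``growth in agents in arc is matched by proportional growth in arc length,'' is false and cannot be ``verified'' as you propose. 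Appending a gathering-free section at the back of an arc adds about $g$ agents and about $g$ edges, but those $g$ new agents cross \emph{every} edge ahead of them in the arc, so after $t$ such sections the edges near the arc's terminal gathering node are crossed by $\Theta(tg)$ moving agents, not $O(g)$, and the total over the arc grows roughly quadratically in its length.

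Concretely, take $g=2$: the single-phase election can end with leaders and inactive agents strictly alternating except in one section, in which case Algorithm~\ref{leader} creates (at most) one node with $\textit{isGather}=1$ and essentially all $k$ agents walk to it under Algorithm~\ref{moving}, costing $\Theta(kn)$ moving moves while $O(gn)=O(n)$; in the fully alternating case no gathering node is created at all. So the step you yourself flag as the main obstacle is exactly the missing piece: neither your per-edge $O(g)$ claim nor the paper's ``at most $2g-1$ agents between gathering nodes'' assertion holds without an additional property (every leader-to-leader section containing at least $g$ inactive agents, or some carry-over of the leader's counter across sections), and your proof is incomplete until such a property is established or the cross-section case is otherwise handled.
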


\begin{proof}
At first, we show the correctness of the proposed algorithm. 
Let $v^g_0,v^g_1,\ldots ,v^g_l$ be nodes such that $v^g_j.\textit{isGather} = 1$ holds ($0\le j\leq l$)
after all leader agents complete their  behaviors, and we call these nodes {\em gathering nodes}. 
From Algorithm \ref{moving}, each moving agent moves to the nearest gathering node $v^g_j$. 
By Lemma \ref{leaders}, 
there exist at least $g-1$ moving agents between $v^g_j$ and $v^g_{j+1}$ 
Hence, agents can solve the $g$-partial gathering problem. 
In the following, we consider the total number of moves required to execute the algorithm.

First, the  total number of moves required for each leader agent to move to its next leader node is obviously $n$. 
Next, let us consider the total number of moves required 
for each moving agent to move to nearest gathering node $v^g_j$ 
 (For example, the total moves from Fig \ref{fig:moving} (c) to Fig \ref{fig:moving} (d)). 
Remind that there are at least $g-1$ inactive agents between two leader agents 
and each leader agent $a_h$ writes 1 to $v_j.\textit{isGather}$ after writing 0 $g-1$ times. 
Hence, there are at most $2g-1$ moving agents between $v^g_j$ and $v^g_{j+1}$.
Thus, the total number of these moves is  $O(gn)$ because each link is passed by at most $2g$ agents. Therefore, we have the lemma.
\end{proof}

From Lemmas \ref{hodai:deterministic} and \ref{hodai:realization}, we have the following theorem. 
\begin{theorem}
\label{teiri:deterministic}
When agents have distinct IDs, our deterministic algorithm solves the $g$-partial gathering problem in $O(gn)$ total moves. \qed

\end{theorem}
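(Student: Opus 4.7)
The plan is to observe that Theorem \ref{teiri:deterministic} is essentially a direct corollary of the two lemmas already proved for the two parts of the algorithm, so the proof reduces to verifying correctness and summing the move counts.

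First, I would argue correctness. The algorithm is a sequential composition of the leader election part (Algorithm \ref{active}) and the gathering part (Algorithms \ref{leader}--\ref{moving}). By Lemma \ref{leaders}, after the first part at least one leader is elected and any two consecutive leaders have at least $g-1$ inactive agents between them; this is precisely the precondition assumed in Lemma \ref{hodai:realization}, which then guarantees that every terminating node hosts at least $g$ agents. The only subtlety worth mentioning is that, because the system is asynchronous, leader agents may begin the second part while some agents are still in the first part; however, the waiting conditions in Algorithm \ref{leader} (lines 8--10) and Algorithm \ref{moving} (lines 3--5) ensure that a fast leader/moving agent cannot overtake a slow active agent without stopping, so the pre/post-condition of the two parts is respected on a per-agent basis. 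This justifies termination and the $g$-partial gathering property.

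Next, I would bound the total moves. Lemma \ref{hodai:deterministic} gives $O(n \log g)$ moves for the first part, and Lemma \ref{hodai:realization} gives $O(gn)$ moves for the second part. Adding these,
\[
O(n \log g) + O(gn) \;=\; O(gn),
\]
since $\log g \le g$ for $g \ge 1$. This is the desired bound and completes the theorem.

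I do not anticipate any real obstacle; the argument is essentially bookkeeping. The only point that deserves a sentence of care is that the two parts can temporally overlap, so the two lemmas must compose cleanly despite asynchrony. Once it is observed that the waiting rules in the second part wait exactly on the variables ($v_j.inactive$ and $v_j.\textit{isGather}$) that the first part and the leader-instruction phase are guaranteed to update, the composition is immediate and the additive move bound is valid.
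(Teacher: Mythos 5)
Your proposal is correct and matches the paper's own argument: the paper derives this theorem directly from Lemma \ref{hodai:deterministic} (the $O(n\log g)$ bound for leader election) and Lemma \ref{hodai:realization} (correctness and the $O(gn)$ bound for the gathering part), exactly as you do. Your extra remarks on the asynchronous overlap of the two parts are a harmless elaboration of what those lemmas already cover.
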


\section{The Second Model: A Randomized Algorithm for Anonymous Agents}
\label{sec:randomized}

In this section, we propose a randomized algorithm to solve the $g$-partial gathering problem for anonymous agents under the assumption that each agent knows the total number $k$ of agents.  The idea of the algorithm is the same as that in Section \ref{sec:deterministic}.
In the first part, agents execute the leader election partially and elect multiple leader agents.
In the second part, the leader agents determine gathering nodes and all agents move to the nearest gathering nodes.
In the previous section each agent uses distinct IDs to elect multiple leader agents, 
but in this section each agent is anonymous and uses random IDs.
We also show that the $g$-partial gathering problem is solved in  $O(gn)$ expected total moves.

\subsection{The first part: leader election}
\label{anonymous}

In this subsection, we explain a randomized algorithm to elect multiple leaders by using random IDs.
Similarly to Section \ref{distinct}, the aim in this part is to satisfy the following 
conditions (leader election conditions): 1) At least one agent is elected as a leader, and 2) there exist at least $g-1$ non-leader agents between two leader agents.
The basic idea is the same as   Section \ref{distinct}, that is, each active agent moves in the ring and compares three random IDs.
If the ID in the middle is the smallest of the three random IDs, the active agent remains active.
Otherwise, the active agent drops out from the candidate of leaders.

Now we explain  details of the algorithm.
In the beginning of each phase, each active agent selects $3\log k$ random bits  as its own ID.
After this, each agent executes in the same way as Section \ref{distinct}, that is,
each active agent moves until it observes two random IDs of active agents and compares three random IDs.
If the observed three IDs are distinct,
the agent can execute the leader agent election similarly to Section \ref{distinct}.
In addition to the behavior of the leader election in Section 3.1,
when an agent becomes a leader at  node $v_j$, the agent sets a $\textit{leader}$-$\textit{flag}$ at $v_j$,
and we explain how leader-flags are used later.
If no agent observes a same random ID, 
the total number of moves for the leader agent election is the same as in Section \ref{distinct}, that is, $O(n\log g)$.
In the following, we consider the case that some agent observes a same random ID.

Let $a_h.id_1,a_h.id_2$, and $a_h.id_3$ be random IDs that an active agent $a_h$ observes in some phase.
If $a_h.id_1=a_h.id_3\neq a_h.id_2$ holds, then $a_h$ behaves similarly to Section \ref{distinct}, that is,
if $a_h.id_2<a_h.id_1=a_h.id_3$ holds,  $a_h$ remains active and $a_h$ becomes inactive otherwise.
For example, let us consider a  configuration of  Fig.\,\ref{fig:random1} (a).
\begin{figure}[t!]
\centering
\includegraphics[keepaspectratio,width=85mm]{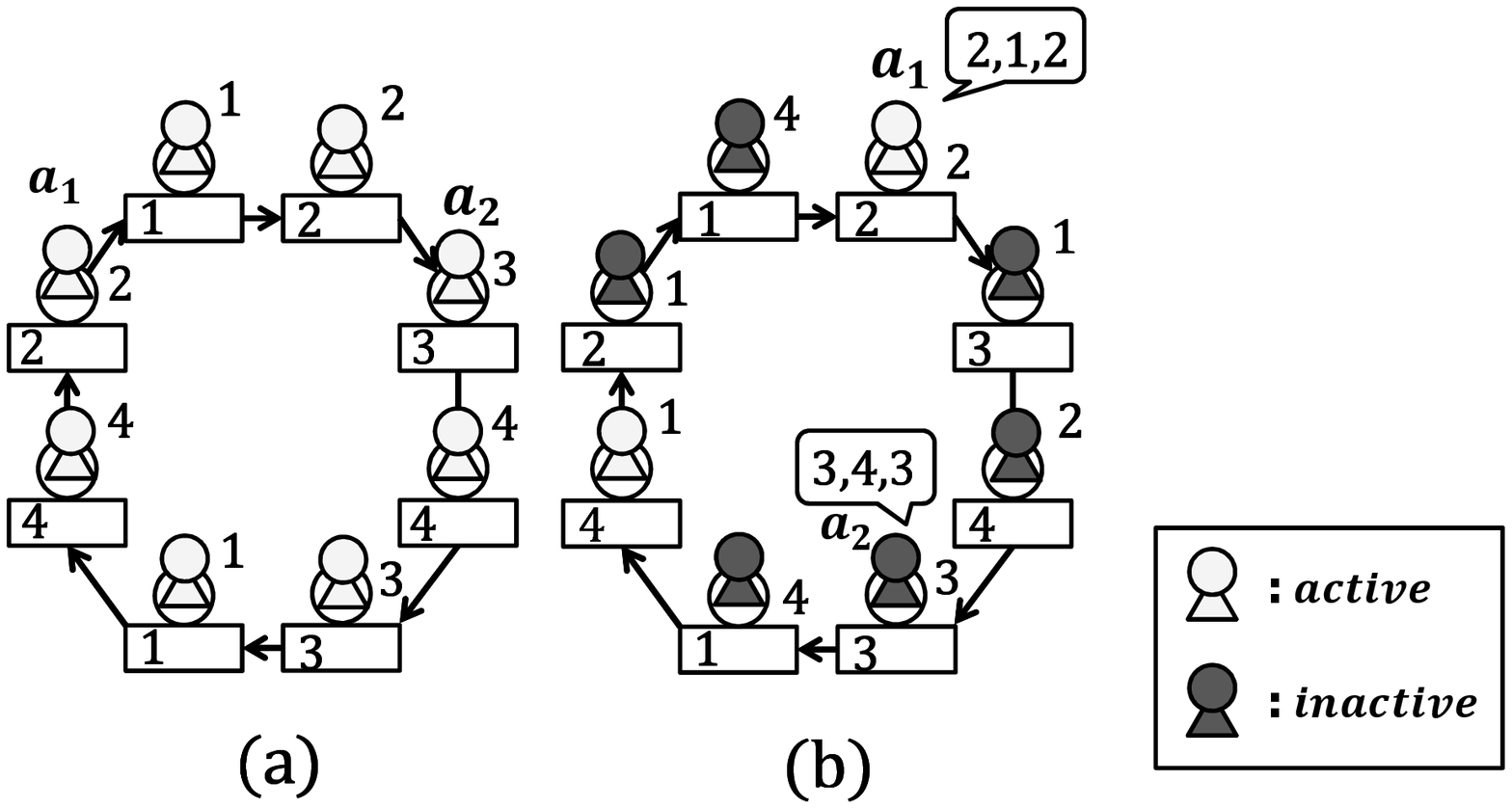}
\caption{An example that some agent observes the same random IDs}
\label{fig:random1}
\end{figure}
Each active agent  moves until it observes two random IDs (Fig.\,\ref{fig:random1} (b)).
Then, agent $a_1$ observes  three random IDs (2,1,2) and remains active 
because $a_1.id_2<a_1.id_1=a_1.id_3$ holds.
On the other hand,  agent $a_2$ observes three random IDs (3,4,3) and becomes inactive because 
$a_2.id_2>a_2.id_1=a_2.id_3$ holds.
The other agents do not observe  same random IDs and behave similarly to Section \ref{distinct}, that is,
if their middle IDs are the smallest, they remain active and execute the next phase.
If their middle IDs are not the smallest, they become inactive.

Next, we consider the case that either $a_h.id_2=a_h.id_1$ or $a_h.id_2=a_h.id_3$ holds.
In this case, $a_h$ changes its own state to a \textit{semi-leader} state.
A semi-leader is an agent that has a possibility to  become a leader if there exists no leader agent in the ring.
When at least one  agent becomes a semi-leader, each active agent becomes inactive.
The outline  of the behavior of each semi-leader agent is as follows:
First each semi-leader travels once around  in the ring. 
After this, if there already exists a leader agent in the ring, each semi-leader becomes inactive.
Otherwise,
the leader election is executed among all semi-leader agents, and 
exactly one semi-leader is elected as a leader and the other agents become inactive (including active agents).
Note that, we can show that the probability some active agent becomes a semi-leader is sufficiently low
and the expected number of semi-leader agents during the leader election is also sufficiently small.
Hence even when each semi-leader travels once around  the ring several times,
the expected total moves to complete the leader agent election can be bounded by $O(n\log g)$.

Now, we explain the detailed behavior for semi-leader agents.
When an  active agent $a_h$ becomes a semi-leader, 
it  sets a \textit{semi-leader-flag} on its current whiteboard.
In the following, the node where the semi-leader flag is set (resp., not set) is called \textit{a  semi-leader node} (resp., \textit{a non-semi-leader node}).
After that, semi-leader agent $a_h$ travels once around  the ring.
In the travel, 
when $a_h$ visits a non-semi-leader node $v_j$ where there exists an agent in the initial configuration, that is, 
a non-semi-leader node $v_j$ such that $v_j.initial=true$ holds,
$a_h$ sets the \textit{tour-flag} at $v_j$.
This flag is used so that other agents notice the existence of a semi-leader and become inactive.
Moreover
when  $a_h$ visits a semi-leader node, $a_h$ 
compares its random ID with the random ID written on the current whiteboard.
Then, $a_h$ memorizes whether its random ID is smaller or not and whether another semi-leader has the same random ID as its  random ID or not.

After traveling once  around  the ring, 
$a_h$ decides if it becomes a leader or inactive.
While traveling in the ring, if $a_h$ observes a leader-flag, 
it learns that there already exists  a leader agent in the ring. 
In this case, $a_h$ becomes inactive.
Otherwise, $a_h$ decides if it  becomes a leader or inactive depending on random IDs.
Let $a_h.id$ be $a_h$'s random ID and $A_{min}$ be the set of semi-leaders such that 
each semi-leader $a_h\in A_{min}$ has the smallest random ID $id_{min}$ among all semi-leaders.
In this case, each semi-leader $a_h\notin A_{min}$ clears a semi-leaders-flag and becomes inactive.
On the other hand, if $a_h$ has the unique minimum random ID (i.e., $|A_{min}|=1$),
$a_h$ becomes a leader.
Otherwise, $a_h$ selects a random ID again, writes the ID  on the current whiteboard,
travels once around   the ring.
Then, $a_h$ obtains new  random IDs of semi-leaders.
Each semi-leader $a_h$ repeats such a behavior until $|A_{min}| = 1$ holds.

\if()
For example, let us consider Fig.\,\ref{fig:semi-leaders}.
For simplicity,  we omit nodes with non-leaders.
Each number in the whiteboard represents a random ID of the semi-leader at the current node.
First in Fig.\,\ref{fig:semi-leaders} (a), 
$a_1$'s random ID 1 is the smallest and there exists no semi-leader whose random ID is 1.
Hence, $a_1$ becomes a leader and the other semi-leaders become inactive after they travel once  around  the ring.
Next in Fig.\,\ref{fig:semi-leaders} (b),
there exist several semi-leaders $a_1,a_2$, and $a_4$ whose random IDs are the smallest.
In this case, agents $a_3,a_5$, and $a_6$ drop out from candidates because their random IDs are not the smallest.
On the other hand,
$a_1,a_2$, and $a_4$ select a random ID again, write the ID  on the current whiteboard, 
and circulate the ring respectively.
After this, we assume that  configuration becomes the one of   Fig.\,\ref{fig:semi-leaders} (c).
Then, $a_1$ becomes a leader since its random ID 1 is the smallest and the other random IDs are not 1.
On the other hand, $a_2$ and $a_4$ become inactive.

\begin{figure}[t!]
\centering
\includegraphics[keepaspectratio,width=90mm]{tmpLeaders.ps}
\caption{The behavior of semi-leaders}
\label{fig:semi-leaders}
\end{figure}
\fi

{\em Pseudocode.} The pseudocode to elect leader agents  is given  in Algorithm \ref{algo:randomVariables} to \ref{algo:semiJudge}.
Algorithm \ref{algo:randomVariables} represents variables required for the behavior of active  agents,
and Algorithm \ref{algo:randomActive} represents the behavior of active agents.
Agent $a_h$ and node $v_j$ have the following variables:
\begin{itemize}

\item $a_h.id_1,a_h.id_2,$ and $a_h.id_3$ are variables 
for $a_h$ to store  random IDs of three successive active agents.
Note that $a_h$ stores its own random ID on $a_h.id_1$.

\item $a_h.phase$ is a variable for $a_h$ to store its phase number.

\item $v_j.phase$ and $v_j.id$ are variables for an active agent to write its phase number and its random ID. 
 For every $v_j$, initial values of these variables are 0.

\item $v_j.\textit{tour}$-$\textit{flag}$ and $v_j.leader$-$\textit{flag}$ are variables to  represent whether there exists an semi-leader agent and a leader agent or not respectively. The initial values of these variables are $false$.

\item $a_h.\textit{semiObserve}$ is a variable for $a_h$ to decide whether it  observes a tour-flag or not.
The initial value of $a_h.\textit{semiObserve}$ is $false$.
\end{itemize}
In addition to these variables, agents $a_h$ uses the procedure \textit{random}($l$) to get its own random ID.
This procedure returns $l$ random bits. 	

\begin{algorithm}[t!]
\caption{Values required for the behavior of active agent $a_h$ ($v_j$ is the current node of $a_h$)}
\label{algo:randomVariables}
\begin{algorithmic}[1]

\item[\textbf{Variables for Agent $a_h$}]
\item[int $a_h.phase$;] 
\item[int $a_h.id_1$,$a_h.id_2$,$a_h.id_3$;] 
\item[boolean $a_h.\textit{semiObserve}=$ \textit{false}]
\item[\textbf{Variables for Node $v_j$}]
\item[int $v_j.phase$;] 
\item[int $v_j.id$;]
\item[boolean $v_j.inactive=$ \textit{false};] 
\item[boolean $v_j.\textit{tour}$-$\textit{flag}=$ \textit{false};] 
\item[boolean $v_j.leader$-$\textit{flag}=$ \textit{false};] 
\end{algorithmic}
\end{algorithm}

\begin{algorithm}[t!]
\caption{The behavior of active agent $a_h$ ($v_j$ is the current node of $a_h$)}
\label{algo:randomActive}
\begin{algorithmic}[1]

\STATE  $a_h.phase=1$
\STATE  $a_h.id_1=$ \textit{random}$(3\log k)$
\STATE $v_j.phase=a_h.phase$
\STATE $v_j.id=a_h.id_1$
\STATE \textit{BasicAction}()
\IF {$v_j.\textit{tour}=\textit{true}$}
\STATE $a_h.\textit{\textit{semiObserve}}=\textit{true}$
\ENDIF
\STATE  $a_h.id_2=v_j.id$
\STATE \textit{BasicAction}()
\IF {$v_j.\textit{tour}=\textit{true}$}
\STATE $a_h.\textit{semiObserve}=\textit{true}$
\ENDIF
\STATE  $a_h.id_3=v_j.id$ 
\IF {$a_h.\textit{semiObserve}=\textit{true}$}
\STATE $v_j.inactive = \textit{true}$
\STATE change its state to an inactive state
\ENDIF
\IF {($a_h.phase=v_j.phase)\land (a_h.id_1=a_h.id_2 \lor a_h.id_2=a_h.id_3)$}
\STATE change its state to a semi-leader state 
\ENDIF
\IF{$a_h.id_2\geq \min (a_h.id_1,a_h.id_3)$}
\STATE  $v_j.inactive=\textit{true}$ 
\STATE change its state to an  inactive state 
\ELSE
\IF {$a_h.phase=\lceil \log g \rceil $}
\STATE $v_j.\textit{leader}$-$\textit{flag}=\textit{true}$
\STATE change its state to a leader state
\ELSE
\STATE $a_h.phase = a_h.phase+1$
\ENDIF
\STATE return to step 2
\ENDIF
\end{algorithmic}
\end{algorithm}

In each phase,
each active agent selects  its own random ID of $3\log k$ bits length through \textit{random}($3\log k$),
and moves until it observes two random IDs by \textit{BasicAction}() in Algorithm \ref{basic1}.
If each active agent $a_h$ neither observes a tour-flag nor observes phase numbers and random IDs such that $(a_h.phase=v_j.phase )\land (a_h.id_2=a_h.id_1 \lor a_h.id_2=a_h.id_3)$ holds, 
this pseudocode works similarly to Algorithm \ref{distinct}.
In this case when an agent becomes a leader, the agent sets a leader-flag at $v_j$ (lines 26 to 29). 
If an active agent $a_h$ observes a tour-flag, 
then $a_h$ moves until it observes two random IDs of active agents and becomes inactive (lines 15 to 18).
Remind that $v_j.inactive$ is a variable to represent whether there exists an inactive agent or not.
If an active agent $a_h$ observes three random IDs 
such that $(a_h.phase=v_j.phase )\land (a_h.id_2=a_h.id_1 \lor a_h.id_2=a_h.id_3)$ holds,
then $a_h$ changes its own state to a semi-leader state (lines 19 to 21).
\begin{algorithm}[t!]
\caption{Values required for the behavior of semi-leader agent $a_h$ ($v_j$ is the current node of $a_h$)}
\label{algo:semiValues}                          
\begin{algorithmic}[1]
\item[\textbf{Variables for Agent $a_h$}]
\item[int] $a_h.\textit{semiPhase}$;
\item[int] $a_h.\textit{semiID}$;
\item[int] $a_h.\textit{agentCount}$;
\item[boolean] $a_h.\textit{isMin}=$ \textit{true}
\item[boolean] $a_h.\textit{isUnique}=$ \textit{true}
\item[boolean] $a_h.\textit{leaderObserve}=$ \textit{false}

\item[\textbf{Variables for Node $v_j$}]
\item[int] $v_j,\textit{semiPhase}$;
\item[int] $v_j.id$;
\item[boolean] $v_j.\textit{leader}$-$\textit{flag}$; 
\item[boolean] $v_j.\textit{semi}$-$\textit{leader}$-$\textit{flag}$;
\item[boolean] $v_j.\textit{tour}$-$\textit{flag}$;
\end{algorithmic}
\end{algorithm}

\begin{algorithm}[h!]
\caption{The first half behavior of semi-leader agent $a_h$ ($v_j$ is the current node of $a_h$)}
\label{algo:semiRing}                          
\begin{algorithmic}[1]
\IF {$v_j.tour$-$\textit{flag}=\textit{true}$}
\STATE $v_j.inactive =\textit{true}$
\STATE change its state to an inactive state 
\ENDIF
\STATE $v_j.semi$-$leader$-$\textit{flag}=\textit{true}$
\STATE $a_h.\textit{semiPhase}=1$
\STATE $v_j.\textit{semiPhase}=a_h.\textit{semiPhase}$
\STATE $v_j.id = random(3\log k)$
\STATE $a_h.\textit{semiID}= v_j.id$
\WHILE {$a_h.\textit{agentCount}\neq k$}
\STATE {\small move to the forward node}
\WHILE {$v_j.initial=$ \textit{false}}
\STATE move to the forward node 
\ENDWHILE
\STATE $a_h.\textit{agentCount}=a_h.\textit{agentCount}+1$
\IF {$v_j.leader$-$\textit{flag}=\textit{true}$}
\STATE $a_h.\textit{leaderObserve}=\textit{true}$
\ENDIF

\IF {$v_j.semi$-$leader$-$\textit{flag}=\textit{true}$}
\IF {$a_h.\textit{semiPhase}\neq v_j.\textit{semiPhase}$}
\STATE wait until $a_h.\textit{semiPhase}=v_j.\textit{semiPhase}$
\ENDIF
\IF {${v_j.id}<a_h.\textit{semiID}$}
\STATE $a_h.\textit{isMin}=\textit{false}$
\ENDIF
\IF {${v_j.id}=a_h.\textit{semiID}$}
\STATE $a_h.\textit{isUnique}=\textit{false}$
\ENDIF

\ELSE
\STATE $v_j.\textit{tour}$-$\textit{flag}=\textit{true}$
\ENDIF
\ENDWHILE

\end{algorithmic}
\end{algorithm}

\begin{algorithm}[h!]
\caption{The latter half behavior of semi-leader agent $a_h$ ($v_j$ is the current node of $a_h$)}
\label{algo:semiJudge}                          
\begin{algorithmic}[1] 
\IF{$a_h.\textit{leaderObserve}=\textit{true}$}
\STATE $v_j.inactive =\textit{true}$
\STATE change its state to an inactive state  
\ENDIF 
\IF {$a_h.\textit{isMin}= \textit{false}$}
\STATE $v_j.\textit{semi}$-$leader$-$\textit{flag}=$ \textit{false}
\STATE $v_j.inactive = \textit{true}$
\STATE change its state to an inactive state 
\ENDIF
\IF {$a_h.\textit{isUnique}=\textit{true}$}
\STATE change its state to a  leader state
\ELSE
\STATE $a_h.\textit{semiPhase}=a_h.\textit{semiPhase}+1$
\STATE $a_h.\textit{agentCount}=0$
\STATE return to step 7 of Algorithm \ref{algo:semiRing}
\ENDIF

\end{algorithmic}
\end{algorithm}

Algorithm \ref{algo:semiValues} represents variables required for the behavior of  semi-leader agents,
and Algorithm \ref{algo:semiRing} and Algorithm \ref{algo:semiJudge} represent the behavior of semi-leader agents.
Semi-leader-agent  $a_h$ and node $v_j$ have the following variables:
\begin{itemize}
\item $a_h.\textit{semiID}$ is a variable for $a_h$ to store its random ID.
\item $a_h.\textit{agentCount}$ is a variable for $a_h$ to detect the  completion of one round of the ring travel.
\item $a_h.\textit{isMin}$ is a variable for $a_h$ to detect whether its random ID is the smallest or not.
The initial value of $a_h.\textit{isMin}$ is $true$.
\item $a_h$.$\textit{isUnique}$ is a variable for $a_h$ to detect whether another semi-leader has the same random ID as its  random ID.
The initial value of $a_h$.\\$\textit{isUnique}$ is $true$.
\item $a_h.\textit{leaderObserve}$ is a variable for $a_h$ to detect 
whether there exists a leader agent in the ring or not.
The initial value of $a_h.\textit{leaderObserve}$ is false.
\item $a_h.\textit{semiPhase}$ is a variable for $a_h$ to store its phase number in the semi-leader state.
\item $v_j.\textit{semiPhase}$ is a variable for a semi-leader agent to write its phase number in the semi-leader state.
\end{itemize}
Variables $a_h.\textit{semiPhase}$ and $v_j.\textit{semiPhase}$ are used for the case that there exist several semi-leaders 
having the same smallest random IDs.
In addition to these variables, each node $v_j$ has variables $v_j.id$, $v_j.\textit{leader}$-$\textit{flag}$, $v_j.\textit{semi}$-$\textit{leader}$-$\textit{flag}$, and $v_j.tour$-$\textit{flag}$ as defined in Algorithm \ref{algo:randomVariables}.

Before semi-leader $a_h$ begins moving in the ring (from  $v_j$), 
if it detects tour-flag at $v_j$, 
another semi-leader $a_{h'}$ has already visited $v_j$. 
Then $a_h$ becomes inactive and does not start the travel in the ring (lines 1 to 4 of Algorithm \ref{algo:semiRing}).
This is because, otherwise,
each semi-leader cannot share the same random IDs.
After each semi-leader travels once around   the ring,
if there exists exactly one semi-leader whose random ID is the smallest, the semi-leader becomes a leader
and the other semi-leaders  become inactive.
Otherwise, each semi-leader $a_h$ whose random ID is the smallest 
 updates its phase and random ID again,
and travels once around   the ring (lines 12 to 15 of Algorithm \ref{algo:semiJudge}).
Then, $a_h$ obtains new value of random IDs. 
Each semi-leader $a_h$ repeats  such a behavior until exactly one semi-leader has the smallest random ID.

We have the following lemmas similarly to Section \ref{distinct}.

\begin{lemma}
\label{lem:randomActive}
Algorithm \ref{algo:randomActive} eventually terminates, and the  configuration satisfies the following properties.

\begin{itemize}
\item There exists at least one leader agent.
\item There exist at least $g-1$ inactive agents between two leader agents. 
\end{itemize}

\end{lemma}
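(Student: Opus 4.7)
The plan is to closely mirror the inductive argument used for Lemma \ref{leaders}, with a case split on whether any active agent ever enters a semi-leader state. Two genuinely new ingredients must be handled: the tour-flag machinery that forces the remaining active agents to drop out once any semi-leader appears, and the re-sampling loop in the semi-leader stage, which must terminate even though multiple semi-leaders may initially share the same random ID.

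First I would establish termination. Inside one phase, the behavior of an active agent that neither observes a tour-flag nor a duplicate random ID is exactly that of Algorithm \ref{active}, so it terminates for the same reason as there: each call of \textit{BasicAction}() finishes because any passed agent eventually writes $v_j.phase$ or $v_j.inactive$. A semi-leader agent performs one full walk whose length is tracked by $a_h.\textit{agentCount}$ and bounded by $k$, which is finite. The only potentially unbounded loop is the re-sampling loop at the end of Algorithm \ref{algo:semiJudge}, but since each surviving semi-leader draws an independent ID of $3\log k$ bits in each round, the event ``exactly one semi-leader holds the unique minimum'' has probability bounded away from zero every round, so the loop terminates in finitely many rounds with probability $1$.

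Second I would prove the two structural properties, again by case analysis. In the semi-leader-free case, the execution projects exactly onto that of Algorithm \ref{active}: the three compared IDs in every triple are pairwise distinct, so every phase with at least two active agents leaves at least one active agent, some agent reaches phase $\lceil \log g\rceil$ (or becomes the unique candidate earlier) and is elected as leader, and the same $2^{j}-1$ separation induction as in Lemma \ref{leaders} yields at least $g-1$ inactive agents between consecutive leaders. In the semi-leader case, any semi-leader plants tour-flags on all non-semi-leader nodes with $v_j.initial=\textit{true}$ that it visits during its first circuit (the ELSE branch of the inner loop in Algorithm \ref{algo:semiRing}), so every other active agent must observe one during a subsequent \textit{BasicAction}() call and be set inactive via the tour-flag check in Algorithm \ref{algo:randomActive}. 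The leader must therefore come from the semi-leader subgroup, and by the termination argument exactly one semi-leader eventually wins the re-sampling loop and becomes a leader. With only a single leader, the separation property is vacuous; one also checks that all $k-1 \ge g-1$ remaining agents have been set inactive.

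The hardest part will be the semi-leader case, specifically ruling out the asynchronous interleavings in which an active agent slips past a semi-leader without ever reading its tour-flag and is then spuriously promoted to leader. This will require tracing, for each active agent, which nodes it can visit between the moment some semi-leader first sets its tour-flags and the moment the active agent next executes \textit{BasicAction}(), and then exploiting the fact that \textit{BasicAction}() only skips non-initial nodes or inactive nodes with a mismatched phase to conclude that the tour-flag is necessarily read before the active agent can update its state to leader.
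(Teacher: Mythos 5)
Your proposal follows essentially the same route as the paper's (very terse) proof: a case split on whether any semi-leader ever appears, reducing the semi-leader-free case to Lemma \ref{leaders} and arguing in the other case that the tour-flag and re-sampling machinery leaves exactly one leader with all other agents inactive, so the two properties hold. The only point to watch is your claim that the leader must then come from the semi-leader subgroup: the algorithm also allows a leader elected earlier via the ordinary phase mechanism (semi-leaders detect this through the leader-flag and become inactive), but this sub-case only strengthens the conclusion, and the paper's own proof glosses it in exactly the same way.
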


\begin{proof}
The above properties are the same as Lemma \ref{active}.
Thus, if no agent becomes  a  semi-leader during the algorithm,
each agent behaves similarly to Section \ref{distinct} and the above properties are satisfied.
Moreover if at least one agent  becomes a semi-leader,
exactly one semi-leader is elected as a leader and the other agents become inactive.
Then, the above properties are clearly satisfied.
Therefore, we have the lemma.
\end{proof}

\begin{lemma}
\label{lem:randomMoves}
The expected total number of agent moves to elect multiple leader agents  is  $O(n\log g)$. 

\end{lemma}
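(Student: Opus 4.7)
The plan is to split the expected total number of moves into the cost of the standard active-agent phases (the part of Algorithm \ref{algo:randomActive} that behaves exactly like the deterministic algorithm of Section \ref{distinct}) and the extra cost generated once some agent enters a semi-leader state. For the former, I would reuse the counting argument of Lemma \ref{hodai:deterministic}: in each of the $\lceil \log g \rceil$ phases every active agent invokes \textit{BasicAction}() at most twice, and because each ring edge is traversed by only a constant number of active agents per phase, each phase contributes at most $O(n)$ moves, giving $O(n\log g)$ in total. This bound holds in every execution, since an agent that is forced to become inactive upon observing a tour-flag still performs at most the same per-phase movement budget.

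Next I would bound the expected cost arising from semi-leaders. An active agent becomes a semi-leader only when its sampled triple $(a_h.id_1,a_h.id_2,a_h.id_3)$ satisfies $a_h.id_2=a_h.id_1$ or $a_h.id_2=a_h.id_3$; since every identifier is drawn uniformly from $\{0,1,\dots,2^{3\log k}-1\}$, a union bound gives probability at most $2/k^3$ per active agent per phase. Summing over at most $k$ active agents in each of the $\lceil \log g\rceil$ phases, the expected total number of semi-leaders ever created is at most $2\log g/k^2$.

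Finally, I would bound the cost attributable to each semi-leader. In a single round, each semi-leader traverses the ring once, performing $O(n)$ moves (Algorithm \ref{algo:semiRing}). Within a round, each surviving semi-leader draws a fresh $3\log k$-bit identifier, and because at most $k$ semi-leaders can coexist, the probability that their identifiers are pairwise distinct is at least $1-\binom{k}{2}/k^3\ge 1/2$; hence the number of rounds is dominated by a geometric random variable with expectation $O(1)$. Multiplying, the expected contribution of semi-leader behaviour is $O(n\log g/k^2)$, which is absorbed into $O(n\log g)$. The main obstacle in making this argument rigorous is confirming that the asynchronous waits and the ``cleanup'' moves of non-semi-leader active agents (triggered by tour-flags) do not blow up the move count: the waits in lines 5--8 of Algorithm \ref{basic1} involve no movement at all, and each cleanup move is already counted inside the per-phase $O(n)$ budget used above, so the two bounds combine to yield the claimed $O(n\log g)$ expected total.
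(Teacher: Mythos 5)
Your proposal is correct and follows essentially the same route as the paper: charge the phase-based movement to the deterministic analysis ($O(n)$ per phase over $\lceil\log g\rceil$ phases), and show the semi-leader overhead is negligible in expectation because ID collisions under $3\log k$-bit random IDs are rare and the number of ring traversals is geometrically bounded. The only cosmetic difference is that you bound the expected number of semi-leaders and multiply by an $O(1)$ expected round count (which is justified since the fresh per-round draws give a uniformly bounded conditional expectation), whereas the paper bounds the probability that any semi-leader appears by $1/k$ and then charges the worst case of $k$ agents circulating, summing $\sum_s s\,t^{s-1}$; both yield the same $O(n\log g)$ conclusion.
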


\begin{proof}
If there exist no neighboring active agents having  the same random IDs, 
Algorithms \ref{algo:randomActive} works similarly to Section \ref{distinct},
and the total number of moves is  $O(n\log g)$.
In the following, we consider the case that some neighboring active agents have the same random IDs.

Let $l$ be the length of a random ID.
Then, the probability that two active neighboring active agents have the same random ID is $(\frac{1}{2})^l$.
Thus, when there exist $k_i$ active agents in the $i$-th phase, 
the probability that there exist neighboring active agents having  the same random IDs is at most $k_i\times (\frac{1}{2})^l$.
Since at least half active agents drop out from candidates in each phase, 
the probability that neighboring active agents  have the same random IDs until the end of the $\lceil \log g\rceil$ phases is at most 
$k\times (\frac{1}{2})^l+\frac{k}{2} \times (\frac{1}{2})^l+\cdots +\frac{k}{2^{\lceil \log g\rceil -1}} \times (\frac{1}{2})^l<2k\times (\frac{1}{2})^l$.
Since $l=3\log k$ holds, 
the probability is at most $\frac{2}{k^2}<\frac{1}{k}$.
We assume that $k$ active agents become semi-leaders and circulate  around the ring 
because this case requires the most total moves.
Then, each semi-leader $a_h$ compares its random ID with random IDs of each semi-leader.
Let $A_{min}$ be the set of semi-leader agents whose random IDs are the smallest.
If $|A_{min}|=1$ holds,
agents finish the leader agent election and the total number of moves is  at most $O(kn)$.
Otherwise,
at least two semi-leaders have the same smallest random IDs.
This probability is at most $k\times (\frac{1}{2})^l$.
In this case, each semi-leader $a_h$ updates its phase and random ID again, travels once around   the ring,
and  obtains new random IDs of each semi-leader.
Each semi-leader $a_h$ repeats  such a behavior until $|A_{min}|=1$ holds. 
We assume that $t=k\times (\frac{1}{2})^l $ and semi-leaders complete  the leader agent election after they circulate  around the ring $s$ times.
In this case, before they circulate around the ring $s-1$ times,
$|A_{min}|\neq 1$ holds every time  they circulate around the ring.
In addition when they circulate around the ring $s$ times, $|A_{min}|=1$ holds, and
the probability such that $|A_{min}|=1$ holds is clearly less than 1.
Hence, the probability such that agents complete the leader election after they circulate around the ring $s$ times 
is at most $t^{s-1}\times 1=t^{s-1}$, and 
the total number of  moves is  at most $skn$.
Since the probability that at least one agent becomes a semi-leader is at most $\frac{1}{k}$,
the expected total number of moves for the case that some agents become semi-leaders and complete  the leader agent election
is at most $O(n\log g)+\frac{1}{k}\times \sum^\infty_{s=1}t^{s-1} \times skn = n \sum^\infty_{s=1}st^{s-1}$.
Let $S_n$ be $1\times 1 + 2\times t +\cdots +nt^{n-1}$.
Then, we have $S_n= (nt^{n+1}-(n-1)t^n+1)/(1-t)^2$.
When $n=\infty$, we have $S_n=1/(1-t)^2$.
Moreover  since $t=k\times (\frac{1}{2})^l$ and $l=3\log k$ hold, 
we have $t<\frac{1}{2}$ and $S_n<4$.
Furthermore, the expected total number of moves is  at most $O(n)$.
Since the total moves to elect multiple leaders for the case that no agent becomes a semi-leader is $O(n\log g)$,
the expected total moves for the leader election is $O(n\log g)$.

Therefore, we have the lemma. 
\end{proof}

\subsection{The second part: movement to gathering nodes}
\label{realization2}

After executing the leader agent election in Section \ref{anonymous}, 
the conditions  shown by Lemma \ref{lem:randomActive} is satisfied, that is,
1) At least one agent is elected as a leader,  and 2) there exist at least $g-1$ inactive agents between two leader agents.
Thus, we can execute the algorithms in Section \ref{realization} after the algorithms in Section \ref{anonymous}.
Therefore, agents can solve the $g$-partial gathering problem.

From Lemmas \ref{hodai:realization}, \ref{lem:randomActive}, and \ref{lem:randomMoves}, we have the following theorem.
\begin{theorem}
\label{teiri:randomized}
When agents have no IDs, our randomized algorithm solves the $g$-partial gathering problem in expected $O(gn)$ total moves.\qed
\end{theorem}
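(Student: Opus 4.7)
The plan is to combine the three lemmas already established in this section in a straightforward additive fashion. The algorithm is structured in two parts: the randomized leader election described in Section~\ref{anonymous}, followed by the movement-to-gathering-nodes procedure of Section~\ref{realization2}, which reuses the deterministic machinery from Section~\ref{realization}. So the correctness and complexity of the composed algorithm decomposes cleanly along the boundary between the two parts.

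First I would invoke Lemma~\ref{lem:randomActive} to argue that, with probability one, the first part terminates in a configuration satisfying the two leader-election conditions: at least one leader exists, and between any two leaders there are at least $g-1$ inactive agents. This is exactly the precondition required by the second part, so the second part may be executed right after the first part without any further assumption. Then I would apply Lemma~\ref{hodai:realization} to conclude that the second part correctly solves the $g$-partial gathering problem and uses $O(gn)$ total moves. Together these establish correctness (finite execution ending with at least $g$ agents at every terminating node).

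For the move bound, I would add the expected cost of each part. By Lemma~\ref{lem:randomMoves}, the expected total number of moves in the first part is $O(n\log g)$, and by Lemma~\ref{hodai:realization} the second part contributes $O(gn)$ moves (deterministically, hence also in expectation). Since $\log g \le g$ for all $g \ge 1$, linearity of expectation gives a total of $O(n\log g) + O(gn) = O(gn)$ expected moves, which is the claimed bound.

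The main thing to be careful about is not a difficult argument but an interface issue: one must verify that the precondition Lemma~\ref{hodai:realization} tacitly assumes about the output of the leader-election part (namely the two leader-election conditions, together with leader/inactive labels residing on initially occupied nodes and the usage of $v_j.initial$, $v_j.inactive$ shared across parts) is indeed what Lemma~\ref{lem:randomActive} provides here. Since the randomized first part maintains exactly the same whiteboard variables as its deterministic counterpart in Section~\ref{distinct} (every leader/inactive transition occurs at an initially occupied node and sets $v_j.inactive$ or the leader flag accordingly), the second part can be plugged in unchanged, and the theorem follows.
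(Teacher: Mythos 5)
Your proposal is correct and follows essentially the same route as the paper: the paper likewise obtains the theorem by combining Lemma~\ref{lem:randomActive} (the randomized election yields the leader-election conditions, so the second part of Section~\ref{realization} can be run unchanged), Lemma~\ref{hodai:realization} (correctness and $O(gn)$ moves of the second part), and Lemma~\ref{lem:randomMoves} (expected $O(n\log g)$ moves of the election), giving $O(gn)$ expected total moves. Your explicit check of the interface between the two parts is a point the paper only states implicitly, but it is the same argument.
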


\section{The Third Model: A Deterministic Algorithm for Anonymous Agents }
\label{sec:DeterAnonymous}

In this section, we consider a deterministic algorithm to solve the $g$-partial gathering problem for anonymous agents. At first, we show that there exist unsolvable initial configurations in this
model. Later, we propose a deterministic algorithm that solves the $g$-partial gathering problem in $O(kn)$ total moves for any solvable initial configuration.

\subsection{Existence of Unsolvable Initial Configurations}
To explain unsolvable initial configurations, we  define the {\em distance sequence} of a configuration. For configuration $c$, we define the distance sequence of agent $a_h$ as
$D_h(c)=(d^h_0(c),\ldots,d^h_{k-1}(c))$, where $d^h_i(c)$ is the distance between the $i$-th forward agent of $a_h$ and the $(i+1)$-th forward agent of $a_h$ in $c$.
Then, we define the distance sequence of configuration $c$ as the lexicographically minimum sequence among $\{D_h(c)|a_h\in A\}$,
and we denote it by $D(c)$.
In addition, we define several functions and variables for 
sequence $D=(d_0,d_1,\ldots,d_{k-1})$. 
Let $\textit{shift}(D,x)=(d_{x},d_{x+1},\ldots,d_{k-1},d_0,d_1,\ldots,d_{x-1})$ and 
when  $D=\textit{shift}(D,x)$ holds for some $x$ such that $0<x<k$ holds, we say $D$ is \textit{periodic}.
Moreover, we define $period$ of $D$ as the minimum (positive) value such that 
$\textit{shift}(\textit{D},period)=\textit{D}$ holds.

Then, we have the following theorem.
\begin{theorem}
\label{unsolvableRing}
Let $c_0$ be an initial configuration. If $D(c_0)$ is periodic and  $period$ is less than $g$, the $g$-partial gathering problem is not solvable.
\end{theorem}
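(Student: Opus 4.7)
The plan is to exploit the periodic symmetry of the initial configuration to force any deterministic algorithm for anonymous agents into a symmetric execution in which no node can ever collect $g$ terminated agents. Let $p$ denote the period of $D(c_0)$, so $p < g$ by hypothesis. Because $p$ is the minimum positive shift preserving $D(c_0)$, the standard subgroup-closure argument on the set of periods yields $p \mid k$. Let $s = \sum_{i=0}^{p-1} d_i$ and let $R$ be the forward rotation of the ring by $s$ edges. Since shifting the distance sequence by $p$ leaves it invariant, $R$ maps the initial location of each agent to the initial location of the agent $p$ positions ahead, so $R$ preserves the initial agent placement. As $0 < s < n$, the rotation $R$ acts fixed-point-freely on $V$ and has order $k/p$. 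Consequently the cyclic group $\langle R \rangle$ partitions the $k$ agents into exactly $p$ orbits, each of size $k/p$.

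Next I would construct an adversarial asynchronous schedule that preserves $R$-symmetry throughout the whole execution: at each step the scheduler activates one full $R$-orbit of agents, cycling through the $p$ orbits to ensure fairness. Because agents are anonymous and deterministic and each transition depends only on the agent's state and the whiteboard it occupies, agents within one orbit that begin in identical states and read $R$-symmetric whiteboards perform identical local computations, write $R$-symmetric updates, and make identical movement decisions. An induction on the step count then shows the global configuration remains $R$-invariant throughout: the state of $v$ equals the state of $R(v)$ for every $v$, every agent $a$ is in the same state as $R(a)$, and the current location of $R(a)$ is always $R$ applied to the location of $a$. The initial whiteboard contents (only $v_j.initial$) are $R$-symmetric because $R$ maps initial-agent nodes to initial-agent nodes.

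Finally, assume for contradiction that this execution terminates. Pick any node $v$ at which some agent terminates. By $R$-invariance, every node in the orbit $\{v, R(v), \ldots, R^{k/p-1}(v)\}$ is also a termination node, and this orbit consists of $k/p$ distinct nodes since $R$ has no fixed point. Within any single orbit of agents, the $k/p$ members are bijected by the powers of $R$ onto the $k/p$ nodes in the orbit of $v$, so that agent orbit contributes exactly one agent to $v$. Summing over all $p$ agent orbits gives at most $p$ agents terminated at $v$, contradicting the $g$-partial gathering requirement because $p < g$.

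The main obstacle is verifying that the asynchronous adversary can indeed enforce a perfectly symmetric execution despite the freedom to interleave co-activated agents in an arbitrary order inside a single step. This dissolves once one observes that $R$ is fixed-point-free, so the agents in a single orbit always occupy pairwise distinct nodes and hence write to pairwise disjoint whiteboards in every step; any intra-step ordering thus produces the same post-step configuration, and the $R$-invariance is preserved cleanly at each transition.
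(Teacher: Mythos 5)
Your proof is correct and takes essentially the same route as the paper: both exploit the rotational symmetry of the periodic initial configuration to produce an admissible execution in which the symmetry is never broken, so that each node where an agent terminates hosts at most $p=\mathit{period}<g$ agents (equivalently, the paper counts at least $k/p$ occupied nodes and derives the same contradiction). The paper simply runs the synchronous schedule and observes that agents with identical distance sequences behave identically, whereas your orbit-by-orbit scheduler and explicit $R$-invariance induction are a more careful formalization of the same idea, which also disposes of the intra-step ordering subtlety that the paper glosses over.
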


\begin{proof}

Let $m=k/period$. Let $A_j$ ($0\le j\le period-1$) be a set of agents $a_h$ such that $D_h(c_0)=\textit{shift}(D(c_0),j)$ holds. Then, when all agents move in the synchronous manner, all agents in $A_j$ continue to do
the same behavior and thus they cannot break the periodicity of the initial configuration. Since the number of agents in $A_j$ is $m$ and no two agents in $A_j$ stay at the same node, there exist $m$
nodes where agents stay in the final configuration. However, since $k/m=period<g$ holds, it is impossible that at least $g$ agents meet at the $m$ nodes. Therefore, the $g$-partial gathering problem is
not solvable.
\end{proof}

\subsection{Proposed Algorithm}

In this section, we propose a deterministic algorithm to solve the $g$-partial gathering problem in $O(kn)$ total moves
for solvable initial configurations. Let $D=D(c_0)$ be the distance sequence of
initial configuration $c_0$. 
From Theorem \ref{unsolvableRing},  the $g$-partial gathering problem is not solvable if $period<g$.
On the other hand, our proposed algorithm solves the $g$-partial gathering problem if $period\geq g$ holds.
 In this section, we assume that each agent knows the
number $k$ of agents.

The idea of the algorithm is as follows: 
First each agent $a_h$ travels once around  the ring and 
obtains the distance sequence $D_h(c_0)$. After that, $a_h$ computes $D$ and $period$. If $period<g$ holds, $a_h$
terminates the algorithm because the $g$-partial gathering problem is not solvable. Otherwise, agent $a_h$ identifies nodes such that agents in $\{a_\ell|D=D_\ell(c_0)\}$ initially exist. Then, $a_h$
moves to the nearest node among them. Clearly $period\,(\ge g)$ agents meet at the node, and the algorithm solves the $g$-partial gathering problem.

\begin{algorithm}[t!]
\caption{The behavior of active agent $a_h$ ($v_j$ is the current node of $a_h$.)}
\label{AnonymousDeterRing}
\begin{algorithmic}[1]
\item [\textbf{Variables in Agent $a_h$}]
\item [int $a_h.\textit{total};$]
\item [int $a_h.dis;$]
\item [int $a_h.x;$]
\item [array of int] $a_h.D$[ ];
\item [array of int] $D_{min}$[ ];
\item[\textbf{Main Routine of Agent $a_h$}]
\STATE $a_h.\textit{total}=0$
\STATE $a_h.dis=0$
\WHILE {$a_h.\textit{total}\neq k$}
\STATE move to the forward node
\WHILE {$v_j.initial = false $}
\STATE move to the forward node
\STATE  $a_h.dis=a_h.dis+1$
\ENDWHILE
\STATE  $a_h.D[a_h.\textit{total}]=a_h.dis$
\STATE  $a_h.\textit{total}=a_h.\textit{total}+1$
\STATE  $a_h.dis=0$
\ENDWHILE
\STATE let $D_{min}$ be a lexicographically minimum sequence among $\{\textit{shift}(a_h.D,x)|0\le x\le k-1\}$.
\STATE  $period=\min \{x>0|shift(D_{min},x)=D_{min}\}$
\IF {$  (g>period)$}
\STATE terminate the algorithm
\STATE // the $g$-partial gathering problem is not solvable
\ENDIF
\STATE $a_h.x= \min \{x\le 0| \textit{shift}(a_h.D,x) = D_{min}\}$ 
\STATE move to the forward node $\sum^{a_h.x-1}_{i=0}{a_h.D[i]}$ times
\end{algorithmic}

\end{algorithm}

We have the following theorem about Algorithm \ref{AnonymousDeterRing}.

\begin{theorem}

When agents have no IDs,
our deterministic algorithm solves the $g$-partial gathering problem in $O(kn)$ total moves
if the initial configuration is solvable.

\end{theorem}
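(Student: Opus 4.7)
My plan is to split the analysis into three pieces: that every agent correctly reconstructs the global distance sequence and its period from its one lap around the ring, that the subsequent deterministic walk collapses the agents onto the matching initial positions with exactly $period$ agents per position, and that the total number of moves is $O(kn)$. Since the algorithm never consults a dynamic whiteboard entry written by another agent, asynchrony plays essentially no role in the argument.

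First, I would argue that lines 3--12 of Algorithm~\ref{AnonymousDeterRing} implement exactly one lap of the ring by $a_h$: the outer while-loop terminates when $a_h.\textit{total}$ reaches $k$, i.e.\ when $a_h$ has passed $k$ nodes with $v_j.\textit{initial}=\textit{true}$; since initial positions are distinct and number $k$, this requires traversing precisely $n$ edges, and the array $a_h.D$ records $D_h(c_0)$ exactly. Hence $D_{min}$ and $period$ computed on lines 13--14 are identical for every agent and coincide with $D(c_0)$ and its period. When $period<g$ every agent terminates on line 16, consistently with Theorem~\ref{unsolvableRing}; we therefore focus on the solvable case $period\ge g$.

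The central combinatorial fact I would prove is that $S_h=\{x\in\{0,\dots,k-1\}\mid \textit{shift}(a_h.D,x)=D_{min}\}$ is a coset in $\mathbb{Z}/k\mathbb{Z}$ of the stabiliser of $D_{min}$ under rotation, hence an arithmetic progression with common difference $period$. Its minimum non-negative element (the ``$x\le 0$'' of line 18 must be read as ``$x\ge 0$'') is therefore the rank of the nearest forward agent whose starting position has distance sequence $D_{min}$, and line 19 walks $a_h$ exactly to that starting node. Labelling those matching initial nodes $p_0,p_1,\dots,p_{k/period-1}$ in forward order, every agent initially lying in the half-open arc of agent-ranks $(p_i,p_{i+1}]$ terminates at $p_{i+1}$, so exactly $period\ge g$ agents meet at each $p_{i+1}$, satisfying Definition~\ref{teigi:partial}.

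For the move complexity, the traversal phase uses exactly $n$ edges per agent and thus $kn$ moves in total. In the gathering phase the forward walks of agents lying in distinct arcs never overlap, and within one arc each edge is crossed by at most $period-1$ of the $period-1$ moving agents of that arc; summing over all edges bounds the gathering phase by $(period-1)n\le kn$ moves. Adding the two phases yields the claimed $O(kn)$ bound. The step I anticipate as the main obstacle is the structural lemma that $S_h$ is an arithmetic progression of step $period$, since this is what guarantees each gathering node receives exactly $period$ agents rather than an uneven basin; once it is in hand, the remaining correctness and complexity claims reduce to elementary counting.
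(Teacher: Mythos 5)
Your proposal is correct and follows essentially the same route as the paper's proof: one lap to reconstruct $D(c_0)$ and its period, termination when $period<g$, movement to the nearest forward agent whose distance sequence equals $D(c_0)$, and an $O(kn)$ count for the traversal and gathering phases. The only difference is that you make explicit, via the coset/arithmetic-progression structure of the set of matching shifts, why exactly $period\,(\ge g)$ agents terminate at each gathering node (and you rightly read the ``$x\le 0$'' in line 18 as ``$x\ge 0$''), a step the paper simply asserts.
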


\begin{proof}
At first, we show the correctness of the algorithm.
Each agent $a_h$ moves around the ring, and computes the distance sequence $D_{min}$ and its  $period$. If $period<g$ holds, the $g$-partial gathering problem is not solvable from Theorem
\ref{unsolvableRing} and $a_h$ terminates the algorithm.
In the following, we consider the case that $period\geq g$ holds.
From line 20 in Algorithm \ref{AnonymousDeterRing}, each agent moves to the forward node $\sum^{a_h.x-1}_{i=0}{a_h.D[i]}$ times.
By this behavior, each agent $a_h$ moves to the nearest node such that agent $a_\ell$ with $a_\ell.D=D(c_0)$ initially exists. Since $period \,(\ge g)$ agents move to the node, the algorithm solves
the $g$-partial gathering problem.

Next, we analyze the total moves required to solve the $g$-partial gathering problem. In Algorithm \ref{AnonymousDeterRing}, all agents circulate the ring. 
This requires $O(kn)$ total moves.
After this, each agent moves at most $n$ times to meet other agents. This requires $O(kn)$ total moves.
Therefore, agents solve the $g$-partial gathering problem in $O(kn)$ total moves.
\end{proof}

\section{Conclusion}
\label{conclusion}

In this paper, we have proposed three  algorithms to solve the $g$-partial gathering problem in asynchronous unidirectional rings.
The first algorithm is deterministic and works for  distinct agents.
The second algorithm is randomized and works for  anonymous agents under the assumption that each agent knows the total number of agents.
The third algorithm is deterministic and works for anonymous agents under the assumption that each agent knows the total number of agents.
In the first and second algorithms, several agents are elected as leaders by executing the leader agent election partially.
The first algorithm uses agents' distinct IDs and the second algorithm uses random IDs.
In the both algorithms, after the leader election, leader agents instruct the other agents where they meet. 
On the other hand, in the third algorithm, each agent moves around the ring and moves to a node and terminates so that at least $g$ agents should meet at the same node.
 We have showed that the first and second  algorithms requires $O(gn)$ total moves, which is asymptotically optimal. 
The future work is to analyze the lower bound under the assumption that the algorithm is deterministic and each agent is anonymous.
We conjecture that it is $\Omega (kn)$, and if the conjecture is correct, we can show that the third algorithm is asymptotically optimal in terms of total moves. 

\bibliographystyle{unsrt} 
\bibliography{ref}








\end{document}